\documentclass[authorcolumns,numberwithinsect,a4paper]{no-lipics}

\newcommand{\Oh}[1]{\mathcal{O}(#1)}
\newcommand{\pspace}{\ensuremath{\mathsf{PSPACE}}}

\newcommand{\iso}{\cong}
\newcommand{\incG}[1]{\mathcal{H}(#1)}
\newcommand{\claws}[2][]{\mathsf{Cw}_{#1}(#2)}
\newcommand{\concomp}[2]{\kappa(#1,#2)}
\usepackage{cleveref}

\title{A Linear-Time Algorithm for the Unbiased Claw Game on Graphs with Maximum Degree~4}
\titlerunning{A Linear-Time Algorithm for the Unbiased Claw Game on Graphs \boldmath\({G}\) with \(\Delta(G)=4\)}
\author{Julian Christoph Brinkmann}{Goethe University Frankfurt, Germany}{J.Brinkmann@em.uni-frankfurt.de}{https://orcid.org/0009-0000-0332-4543}{}
\authorrunning{J. C. Brinkmann}
\begin{document}

\thispagestyle{plain}
\maketitle
\begin{abstract}
	We study the unbiased Maker-Breaker \(K_{1,3}\)-game, the unbiased claw game, played on the edge set of a general graph \(G\).
	In this game, Maker and Breaker alternatingly claim edges of \(G\), and Maker wins if his claimed edges contain a claw once no more edges are available.
	Graphs with maximum degree at least~5 are Maker's win.
	A graph \(G\) is claw-connected if the hypergraph modeling the unbiased claw game played on \(G\) is connected.
	We characterize Maker's win in the unbiased claw game on every claw-connected graph whose maximum degree is different from~3.
	Our main structural result is a complete characterization of Maker's win for claw-connected graphs with exactly one vertex of degree~4.
	The analysis is based on the fact that vertices of degree~4 act as forcing structures.
	
	Our characterization results in a linear-time algorithm for the unbiased claw game on claw-connected graphs with maximum degree different from~3.
	We also obtain a linear-time decomposition algorithm reducing the game on general graphs to a collection of pairwise edge-disjoint claw-connected subgraphs of maximum degree exactly~3, thereby isolating the only remaining case.
	This decomposition yields an \(\Oh{|V(G)|^3}\)-time algorithm for the unbiased claw game on general graphs, improving on the \(\Oh{|V(G)|^7}\)-time algorithm implied by the work of Galliot, Gravier and Sivignon~(arXiv~2026).
	For a lower bound on the decision complexity, we prove that any agent deciding the game in the adjacency-array model makes at least \(\Omega(|V(G)|)\) queries in the worst case, indicating the optimality of our linear-time algorithm.
\end{abstract}

\setcounter{tocdepth}{2}
\tableofcontents
\clearpage

\section{Introduction}
A positional game is a two-player game in which the players take turns claiming vertices of a hypergraph.
There are several variants with different winning conditions, including Maker-Maker, Maker-Breaker, and Avoider-Enforcer games; see~\cite{hefetz2014positional} for an introduction to the area.
We focus on Maker-Breaker games, which were introduced by Erdős and Selfridge~\cite{DBLP:journals/jct/ErdosS73}, where Maker aims to occupy every vertex of a hyperedge while Breaker seeks to stop Maker from doing so.
A common setting is to define the game on the edge set of an ordinary graph, treating its edges as hypergraph vertices and declaring certain collections of them winning, for example those edge sets forming a spanning tree, perfect matching, Hamiltonian cycle or a copy of a \emph{fixed} graph \(H\).
In this work, we investigate the Maker-Breaker \(K_{1,3}\)-game, the unbiased claw game, on general graphs.
In the unbiased game, each player claims exactly one vertex in every round.
For biased variants, in which Breaker is allowed to claim several vertices in each round, see~\cite{chvatal1978biased,DBLP:journals/combinatorica/BednarskaL00}.

The computational complexity of Maker-Breaker games on hypergraphs has recently been fully classified.
A hypergraph has rank \(k\) if all of its hyperedges contain at most \(k\) vertices and at least one hyperedge contains exactly \(k\) vertices.
In a seminal work, Schaefer~\cite{DBLP:journals/jcss/Schaefer78} proved that Maker-Breaker games are \(\pspace\)-complete even when restricted to hypergraphs with rank most~11.
In subsequent works, the rank bound was reduced to~6 by Rahman and Watson~\cite{DBLP:journals/combinatorica/RahmanW23}, to rank~5 by Koepke~\cite{DBLP:journals/corr/abs-2502-20271} and finally to rank~4 by Galliot and Sénizergues~\cite{DBLP:conf/stacs/GalliotS26}.
Galliot~\cite{DBLP:journals/corr/abs-2509-13819} also obtained the stronger result that \(\pspace\)-completeness already holds for 4-uniform hypergraphs, in which every hyperedge consists of exactly four vertices.
In contrast, Maker-Breaker games on hypergraphs of rank at most~3 admit a polynomial-time algorithm for determining the winner of the game~\cite{galliot2025makerbreakersolvedpolynomialtime}.
This extends the earlier result of Kutz~\cite{kutz2004thesis}, who obtained a polynomial-time algorithm for linear hypergraphs of rank at most~3.
A hypergraph is linear if two distinct hyperedges intersect in at most one vertex.

These results all deal with the complexity of unrestricted Maker-Breaker games on hypergraphs.
Duchéne et al.~\cite{DBLP:journals/dam/DucheneGINOPS25} investigated the complexity of the Maker-Breaker subgraph game, the \(H\)-game.
They proved that the Maker-Breaker \(T\)-game is \(\pspace\)-complete even for board graphs \(G\) with diameter at most~6, where \(T\) is a fixed tree on~91 vertices.
They also established \(\pspace\)-completeness of the \(H\)-game for a particular graph \(H\) with~51 vertices and~57 edges.
Their constructions rely on the hardness of Maker-Breaker games on hypergraphs of rank~6.
The improvement from~\cite{DBLP:journals/corr/abs-2509-13819} allows these constructions to be reduced to a tree \(T\) with~45 vertices and a graph \(H\) with~35 vertices and~39 edges, respectively.

For most target graphs \(H\), the complexity of the corresponding Maker-Breaker \(H\)-game is open.
Galliot~\cite{DBLP:journals/corr/abs-2509-13819} proved \(\pspace\)-completeness of the \(C_4\)-game played on the \emph{vertices}, rather than the edges, of general graphs.
The complexity of the usual edge-based \(C_4\)-game remains open.
The polynomial-time algorithm of Galliot, Gravier and Sivignon~\cite{galliot2025makerbreakersolvedpolynomialtime} yields polynomial-time solvability of the \(H\)-game if \(H\) has at most three edges.
If \(H\) has at most two edges, it is easy to determine the winner of the \(H\)-game in linear time.
For disconnected \(H\), Maker is essentially playing different games in parallel.
Thus, connected three-edge graphs are of particular interest.
The \(P_4\)-game, where \(P_4\) denotes the four-vertex path, was considered in~\cite{DBLP:journals/dam/DucheneGINOPS25}.
The triangle game was analyzed in~\cite{DBLP:journals/corr/abs-2607-18569}.
The only remaining connected three-edge target graph is the claw, which received only limited attention in~\cite{DBLP:journals/dam/DucheneGINOPS25}, where the authors established an FPT-algorithm for the \(K_{1,3}\)-game.

The algorithm presented in~\cite{galliot2025makerbreakersolvedpolynomialtime} has running time \(\Oh{|V(\mathcal{H})|^5|E(\mathcal{H})|^2+|V(\mathcal{H})|^6\Delta(\mathcal{H})}\), where \(\Delta(\mathcal{H})\) denotes the maximum vertex degree in the hypergraph \(\mathcal{H}\).
While this result does provide a polynomial-time algorithm for determining the outcome of the \(H\)-game whenever \(H\) contains at most three edges, the asymptotic running time is \(\Oh{|V(G)|^{16}}\) when expressed in terms of the board graph \(G\) as the vertices of \(\mathcal{H}\) are the edges of \(G\) and the hyperedges are the edge-induced subgraphs of \(G\) that are isomorphic to \(H\).
Duchéne et al.~\cite{DBLP:journals/dam/DucheneGINOPS25} characterized the maximal board graphs on which Breaker wins the \(P_4\)-game.
Since membership in this family of maximal Breaker-winning graphs can be tested in linear time, this yields a linear-time algorithm for the \(P_4\)-game.
For the unbiased triangle game, Brinkmann and Srivastav~\cite{DBLP:journals/corr/abs-2607-18569} obtained an algorithm with running time \(\Oh{|V(G)|^7}\).
Thus, exploiting properties specific to the target graph \(H\) can lead to considerably better algorithms than the general-purpose algorithm of~\cite{galliot2025makerbreakersolvedpolynomialtime}.

Duchéne et al.~\cite{DBLP:journals/dam/DucheneGINOPS25} also show that the \(K_{1,\ell}\)-game is fixed-parameter tractable when parameterized by the length of the game, where \(K_{1,\ell}\) is a star graph with \(\ell\) edges.
The authors of~\cite{DBLP:journals/dam/DucheneGINOPS25} do not give a concrete bound for the running time of their FPT-algorithm.
No other attempts have been made to improve on the \(\Oh{|V(G)|^{16}}\) algorithm of~\cite{galliot2025makerbreakersolvedpolynomialtime} for the unbiased claw game.
Arguably, an FPT-algorithm, especially without a concrete running time, is no improvement at all compared to a polynomial-time algorithm.

\subsection{Our Contribution}
We close this gap by making the first step towards a linear-time algorithm for the unbiased claw game, giving such an algorithm for all claw-connected graphs with maximum degree different from~3.
A graph \(G\) is claw-connected if the hypergraph modeling the unbiased claw game on \(G\) is connected, see also \cref{defn:claw_connected}.
Further, we give an algorithm for the unbiased claw game on arbitrary graphs with running time \(\Oh{|V(G)|^3}\).
This improves the state of the art by a factor of \(|V(G)|^4\) even when adding the domain-specific improvement given by \cref{prop:deg5_trivial} to the algorithm from \cite{galliot2025makerbreakersolvedpolynomialtime}, see \cref{fig:claw_alg_table}.

Additionally, we give a decomposition algorithm that allows future work to focus only on claw-connected graphs with maximum degree exactly~3 to settle the problem, see \cref{cor:decomp_alg}.
Finally, we show that, in the adjacency-array model of graphs, where the degrees of all vertices are given and the adjacency array of all vertices can be queried, every computationally unbounded agent deciding the unbiased claw game makes at least \(\Omega(|V(G)|)\) queries.
This indicates that a linear running time is best possible within standard models of computation.

\begin{figure}
	\centering
	\begin{tabular}{c|l|c}
		Graph class & \multicolumn{1}{c|}{Running time} & Source\\\hline
		\multirow{4}{*}{general graphs} & \(f(k)\cdot\text{poly}(|V(G)|)\) & \cite{DBLP:journals/dam/DucheneGINOPS25}\\
		& \(\Oh{|V(G)|+|E(G)|^5|\claws{E(G)}|^2+|V(G)||E(G)|^6}\) &\cite{galliot2025makerbreakersolvedpolynomialtime}\\
		& \(\Oh{|V(G)|^7}\) &\cite{galliot2025makerbreakersolvedpolynomialtime} with \cref{prop:deg5_trivial}\\
		& \(\Oh{|V(G)|^3}\) & this work\\\hline
		claw-connected, \(\Delta(G)\neq3\)& \(\Oh{|V(G)|}\) & this work
	\end{tabular}
	\caption{Overview of algorithms for the unbiased claw game on different graph classes.
	The authors of~\cite{DBLP:journals/dam/DucheneGINOPS25} do not provide a concrete running time bound.}
	\label{fig:claw_alg_table}
\end{figure}

From a technical perspective, we augment the framework of Galliot, Gravier and Sivignon \cite{galliot2025makerbreakersolvedpolynomialtime}, in which Maker's win is characterized by structures arising within at most three full rounds of play, by limiting player choices to a threatening structure for Breaker.
The threatening structure is given by vertices with degree~4:
When Maker claims an edge incident to a vertex \(v\) with degree~4, Breaker must also claim an edge incident to \(v\) to avoid losing.
This is similar to the role played by subgraphs isomorphic to \(K_4\), the complete graph on four vertices, in the unbiased triangle game, see~\cite[Theorem~4.17]{DBLP:journals/corr/abs-2607-18569}.
As the algorithm from~\cite{galliot2025makerbreakersolvedpolynomialtime} requires the simulation of three full rounds of play, a running time faster than \(\Oh{|E(G)|^6}\) cannot be achieved using their methods.
However, our approach ultimately allows us to reduce to a situation in which the game can be modeled by a \emph{linear} hypergraph.
This enables the use of more efficient algorithms instead of \cite{galliot2025makerbreakersolvedpolynomialtime} for deciding the winner of the game, leading to \cref{thm:claw_con_alg}.

The paper is structured as follows:
In \cref{sec:claw_prelims}, we introduce the terminology used in this work.
Then, we characterize Maker's win in the unbiased claw game played on claw-connected graphs with maximum degree at least~4 in \cref{sec:analysis}.
\Cref{sec:alg} contains our algorithmic results.
Finally, we show a lower bound on the query complexity of the problem in \cref{sec:query_lb}.

\section{Preliminaries}\label{sec:claw_prelims}
\subparagraph*{Graphs.}
Let \(G\) be a graph.
We denote the vertex set of \(G\) by \(V(G)\) and the edge set of \(G\) by \(E(G)\).
We denote the \emph{neighborhood} of \(v\in V(G)\) in \(G\) by \(N_G(v)\).
Formally, \(N_G(v)=\{w\in V(G): uw\in E(G)\}\).
The \emph{degree} of a vertex is \(|N_G(u)Y\).
The maximum degree of a vertex in \(G\) is denoted \(\Delta(G)\), i.e.~\({\Delta(G)=\max\{|N_G(v)|:v\in V(G)\}}\).
The graph \(K_{a,b}\) is the complete bipartite graph where the two color classes have size \(a\) and \(b\), respectively.
We call (copies of) \(K_{1,3}\) \emph{claws}.
We write \(G\iso H\) to denote that the graphs \(G\) and \(H\) are isomorphic.
For \(F\subseteq E(G)\), the subgraph induced by \(F\) is denoted by \(G[F]=(V_F,F)\) where \(V_F\) is the set of endpoints of edges in \(F\).
We use \(G-F\) to denote the graph \((V(G),E(G)\setminus F)\).
For edges \(e\in E(G)\), we write \(G-e\) instead of \(G-\{e\}\).
The \emph{set of claws} using edges from \(F\) is denoted by \(\claws[G]{F}\).
Formally, \(\claws[G]{F}=\{S\subseteq F: G[S]\iso K_{1,3}\}\).
For \(U\subseteq V(G)\), the subgraph induced by \(U\) is denoted by \(G[U]=(U,E_U)\) where \(E_U=\{e\in E(G)\mid e\subseteq U\}\).
We use \(G-U\) to denote the induced subgraph \(G[V(G)\setminus U]\).
For a vertex \(v\in V(G)\), we write \(G-v\) instead of \(G-\{v\}\).
We will omit the subscript when \(G\) is clear from the context.

\subparagraph*{Hypergraphs.} We use the same notation for hypergraphs.
Given a hypergraph \(H\) and a subset \({U\subsetneq V(H)}\), the hypergraph \(H^{+U}\) has vertex set \(V(H)\setminus U\) and edge set \({\{e\setminus U:e\in E(H)\}\setminus\{\emptyset\}}\).
For \(v\in V(H)\), we write \(H^{+v}\) instead of \(H^{+\{v\}}\).
For \(\emptyset\neq e\subseteq V(H)\), the hypergraph \(H+e\) has vertex set \(V(H)\) and edge set \(E(H)\cup\{e\}\).

The unbiased claw game played on the graph \(G\) is the following two player game:
The players Maker and Breaker alternatingly claim edges of \(G\) starting with Maker.
In each round, Maker and Breaker claim exactly one edge that has not already been claimed by any player as long as there are still edges available.
After all edges are claimed, Maker wins if a subset of the edges he claimed induces a \(K_{1,3}\) in \(G\).
Otherwise, Breaker wins.
If Maker wins the unbiased claw game on \(G\), \(G\) is called Maker's win.
Otherwise, \(G\) is called Breaker's win.
This game is modeled by the following hypergraph:

\begin{definition}\label{defn:claw_connected}
	Let \(G\) be a graph.
	We define the hypergraph \(\incG{G}\) by
	\begin{align*}
		V(\incG{G})=E(G) && E(\incG{G})=\claws[G]{E(G)}.
	\end{align*}
	If \(\incG{G}\) is connected, then \(G\) is called \emph{claw-connected}.
\end{definition}

Two different claws can intersect in up to two edges when \(G\) contains a vertex with degree at least~4, so the hypergraph \(\incG{G}\) is not necessarily linear.
Every claw-connected graph without isolated vertices is connected.
However, not every connected graph is claw-connected, e.g.~a path consisting of two edges.

\subparagraph*{Marked hypergraphs.}
Some results from \cite{galliot2025makerbreakersolvedpolynomialtime} are required for our analysis.
We now introduce the necessary terminology.
A \emph{marked hypergraph} is a pair \((H,M)\), where \(H\) is a hypergraph and \({M\subseteq V(H)}\) is the set of \emph{marked vertices}.
During a Maker-Breaker game, the updated marked hypergraph is obtained by marking the vertices claimed by Maker and deleting the vertices claimed by Breaker.
For example, if Maker claims \(v_1\) and Breaker claims \(v_2\) in the current marked hypergraph \((H,M)\), then the updated marked hypergraph is \((H-v_2,M\cup\{v_1\})\).
We usually view the updated hypergraph as \(H^{+M\cup\{v_1\}}-v_2\), but the two views are equivalent.

Let \(H\) be a 3-uniform hypergraph.
A \emph{chain} in \(H\) is a sequence \(P=(e_1,\dots,e_k)\) of pairwise distinct hyperedges such that \(|e_i\cap e_{i+1}|=1\) for every \(i\) such that \(1\leq i<k\) and \(e_i\cap e_j=\emptyset\) whenever \(|i-j|\geq2\).
A \emph{cycle} is defined analogously by further requiring \(|e_1\cap e_k|=1\).
Additionally, a pair of hyperedges \((e_1,e_2)\) is also called a cycle if \(|e_1\cap e_2|=2\).
The inner vertices of a cycle \((e_1,\dots,e_k)\) are the vertices with degree~2 in \(H[\{e_1,\dots,e_k\}]\).
An \emph{\(xy\)-nunchaku} is a chain \((e_1,\dots,e_k)\) with such that \(x\in e_1\setminus e_2\) and \(y\in e_k\setminus e_{k-1}\) whose marked vertices are precisely \(x\) and \(y\).
An \emph{\(x\)-necklace} is a cycle \(C\) such that \(x\) is an inner vertex of \(C\) and the unique marked vertex in \(C\) is \(x\).
If the distinguished vertices are irrelevant, we will refer to these objects simply as nunchakus and necklaces.
See~\cref{fig:nunchaku} for an example of an \(xy\)-nunchaku and an \(x\)-necklace.
Nunchakus and necklaces characterize Maker's win in the game on \(H\):

\begin{figure}[t]
	\centering
	\includegraphics[scale=1.4]{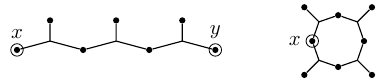}
	\caption{An \(xy\)-nunchaku (left) and an \(x\)-necklace (right).
	Marked vertices are encircled.}
	\label{fig:nunchaku}
\end{figure}

\begin{theoremq}[{\cite[Theorem~3.23, simplified]{galliot2025makerbreakersolvedpolynomialtime}}]\label{thm:3_rounds_chara}
	Let \(H\) be a 3-uniform hypergraph.
	Then \(H\) is a Maker win if and only if Maker has a strategy ensuring that there is a nunchaku or a necklace in the updated marked hypergraph obtained after at most three full rounds of play on \(H\).
\end{theoremq}

\section{Analysis of Claw-Connected Graphs with Maximum Degree at least 4}\label{sec:analysis}
In this section, we analyze the unbiased claw game played on a graph \(G\) with maximum degree at least~4.
Each connected component of \(\incG{G}\) can be seen as an independent game.
Therefore, we will primarily restrict ourselves to claw-connected graphs.
In this setting, Maker's win is fully characterized by the results of this section.
We also give an algorithm with running time \(\Oh{|V(G)|}\) for determining the winner of the unbiased claw game played on a claw-connected graph with maximum degree different from~3 in \cref{sec:alg}, see \cref{thm:claw_con_alg}.
Claw-connected graphs with maximum degree exactly~3 are left as an open problem for future work.
If such graphs also admit a linear-time checkable winning condition, the unbiased claw game can be solved in linear time, see also \cref{cor:decomp_alg}.

We begin with two simple sufficient conditions for Maker's win.
The first condition, \cref{prop:deg5_trivial}, arises from the trivial fact that a vertex with degree at least~5 allows Maker to win because Maker moves first.
Although trivial, this condition is important as it limits the class of interesting graphs to those graphs with maximum degree~4.
Thus, the number of edges can be assumed to be linearly bounded by the number of vertices.
The second condition, \cref{lem:two_deg_4}, provides additional structural insight:
A claw-connected graph containing at least two vertices with degree~4 is Maker's win.

\begin{propositionq}\label{prop:deg5_trivial}
	Let \(G\) be a graph with \(\Delta(G)\geq5\).
	Then Maker wins on \(\incG{G}\).
\end{propositionq}

\begin{lemma}\label{lem:two_deg_4}
	Let \(G\) be a claw-connected graph with \(\Delta(G)=4\) containing at least two vertices of degree exactly~4.
	Then Maker wins on \(\incG{G}\).
\end{lemma}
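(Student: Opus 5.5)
The plan is to use vertices of degree~4 as forcing structures and reduce to a nunchaku in the sense of \cref{thm:3_rounds_chara}.

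\textbf{Forcing at a degree-4 vertex.} If Maker owns two edges at a vertex \(w\) of degree~4 and at least two edges at \(w\) are still free, Maker completes a claw at \(w\) next turn. Hence, once Maker claims an edge at \(w\) while all four edges there are free, Breaker must immediately claim an edge at \(w\). Similarly, if Maker holds two edges at \(w\) and Breaker one, Breaker must take the last one. So a degree-4 vertex lets Maker place up to two edges of his choice at \(w\) with tempo: Breaker's replies are forced, and Breaker's only freedom is which edge at \(w\) he takes.

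\textbf{Adjacent case.} Let \(u\neq v\) be two vertices of degree~4. If \(uv\in E(G)\), Maker claims \(uv\). Every Breaker reply is incident to at most one of \(u,v\), except \(uv\) itself, which is taken. So at one of them, say \(v\), three edges remain free and Breaker owns none. Maker claims a second edge at \(v\), and wins by the forcing observation.

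\textbf{Non-adjacent case.} Here I use claw-connectivity. Take a shortest path \(C_1,\dots,C_k\) in \(\incG{G}\) with \(C_1\) centred at \(u\) and \(C_k\) centred at \(v\). Consecutive claws share at least one edge, so their centres \(u=w_1,\dots,w_k=v\) form a walk in which consecutive centres are adjacent. By minimality, the intermediate centres are distinct, have degree~3 (otherwise we shorten to a closer pair of degree-4 vertices), and are joined by the connecting edges \(w_iw_{i+1}\).

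Maker first plays the forced two-edge sequence at \(u\), choosing \(uw_2\) as one of his edges. Then \(w_2\), a degree-3 vertex with one Maker edge, behaves like a 2-uniform hyperedge consisting of its two remaining edges. Maker then does the same at \(v\), choosing \(vw_{k-1}\). In the marked hypergraph, the degree-3 vertices \(w_2,\dots,w_{k-1}\) now form a chain of claws whose two end claws each contain one marked edge, namely \(uw_2\) and \(vw_{k-1}\). This is exactly an \(xy\)-nunchaku. When \(k=3\), both marked edges lie on the middle claw, which then needs only its third edge, so Maker wins at once. The general nunchaku strategy then wins: Maker claims a shared edge of the chain, creating a double threat.

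\textbf{Main obstacle.} The hard step is controlling Breaker's freedom in his forced replies at \(u\) and \(v\). Breaker may take the edge at \(u\) that lies on the chain, for instance \(uw_2\) itself after Maker's first choice. Breaker may also take an edge of the chain between \(w_2\) and \(w_{k-1}\) when this edge is incident to \(u\), which can happen since \(u\) has four neighbours. I would handle this by ordering Maker's moves: he claims the chain edge \(uw_2\) first, so Breaker's forced reply must be one of the other three edges at \(u\). Maker's second edge at \(u\) is then chosen among the remaining ones, so that it creates an additional threat whenever Breaker's reply touched the chain. The remaining cases, where the shortest path passes through neighbours of both \(u\) and \(v\), are settled by a short case analysis on the at most two edges of \(C_2\) that are incident to \(u\).
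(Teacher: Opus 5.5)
\textbf{Verdict: right approach, but the step that makes it work is left unproved.} Your plan is the paper's plan. Pick a closest pair of degree-4 vertices and a shortest connecting path in \(\incG{G}\). Let Maker claim an edge at each of the two vertices so that Breaker's replies are forced there, and read off a nunchaku. Your chain \(C_2,\dots,C_{k-1}\) is the paper's shortest path from the edges at \(u\) to the edges at \(v\), with end vertices \(uw_2\) and \(vw_{k-1}\). Your adjacent case is the paper's case \(P=e_1\). The gap is the step you yourself flag: you never show that Breaker's forced replies avoid the chain. Instead you assert that Breaker ``may also take an edge of the chain between \(w_2\) and \(w_{k-1}\) when this edge is incident to \(u\), which can happen''. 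You then defer this to an unspecified case analysis ``on the at most two edges of \(C_2\) that are incident to \(u\)'', and to a second Maker edge at \(u\) that would ``create an additional threat''. None of this is carried out, so as written the surviving nunchaku is not established.

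\textbf{The missing argument is your own minimality.} It rules the situation out entirely:
\begin{itemize}
\item \(C_2\) is the claw at the degree-3 vertex \(w_2\), so its only edge at \(u\) is \(uw_2\). There is exactly one such edge, not ``at most two''.
\item Suppose some \(C_j\) with \(3\le j\le k-1\) contained an edge at \(u\). That edge would be \(uw_j\). A claw at \(u\) through \(uw_j\), followed by \(C_j,\dots,C_k\), is then a path with \(k-j+2<k\) hyperedges, contradicting the choice of the path.
\end{itemize}
So once Maker claims \(uw_2\) first, every Breaker reply at \(u\) lies outside \(C_2\cup\dots\cup C_{k-1}\). The same holds for any further Maker edge at \(u\).

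The situation at \(v\) is symmetric. In addition, nothing at \(v\) was claimed during the play at \(u\), because \(uv\notin E(G)\). This is exactly the paper's remark that none of Breaker's forced responses lie on \(P\), since otherwise \(P\) would not be shortest.

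With this in place, no case analysis is needed. The second forced exchange at each degree-4 vertex is also superfluous. One Maker edge at \(u\) and one at \(v\) already leave a \(uw_2\)--\(vw_{k-1}\) nunchaku, or a double threat when \(k=3\).
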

\begin{proof}
	Let \(v\) and \(w\) be different vertices of degree exactly~4.
	For \(u\in V(G)\), let \(E_u\) be the set of edges incident to \(u\).
	Let \(P=e_1\dots e_k\) be a shortest path from \(E_v\) to \(E_w\) in \(\incG{G}\).
	We set \(P=e_1\) for an edge \({e_1\in E_v\cap E_w}\) in the case that \(E_v\cap E_w\neq\emptyset\).
	Without loss of generality, let \(v\) and \(w\) be chosen such that the resulting path \(P\) is shortest among all possible pairs of different degree~4 vertices.
	
	Maker first claims \(e_1\in E_v\).
	Breaker must then respond by claiming an edge \(a_1\) from \(E_v\) as otherwise Maker wins by claiming available edges from \(E_v\) in the next two rounds.
	There is at most one edge incident to both \(v\) and \(w\), so \(E_v\cap E_w\subseteq\{e_1\}\).
	If \(P=e_1\), then Maker wins by claiming available edges from \(E_w\) in the next two rounds.
	Otherwise, Maker claims \(e_k\in E_w\).
	Breaker must respond by claiming an edge \(a_2\) from \(E_w\) as otherwise Maker wins by claiming available edges from \(E_w\) in the next two rounds.
	
	None of Breaker's forced responses lie on the path \(P\) as otherwise \(P\) would not be a shortest path between \(E_v\) and \(E_w\) in \(\incG{G}\).
	Additionally, all the internal edge-vertices of \(P\) are not incident to vertices of degree~4 as this contradicts the choice of \(v\) and \(w\).
	Consequently, the path \(P\) is linear.
	In the terminology of~\cite{galliot2025makerbreakersolvedpolynomialtime}, the path \(P\) is an \(e_1e_k\)-nunchaku.
	Thus, Maker wins the game by~\cref{thm:3_rounds_chara}.
\end{proof}

\Cref{lem:two_deg_4} allows us to restrict ourselves to claw-connected graphs containing exactly one vertex with degree~4.
The characterization of Maker's win in this setting is the main result of this section, \cref{thm:degree_4_chara}.
For the proof of \cref{thm:degree_4_chara}, we require the following two results:

\begin{lemmaq}[{\cite[Corollary~4.16]{DBLP:journals/corr/abs-2607-18569}}]\label{lem:preprocessing}
	Let \(H=A\cup B\) be the union of two hypergraphs \(A\) and \(B\) such that \({V(A)\cap V(B)=\{v\}}\) and Breaker wins on \(B^{+v}\).
	Then Maker wins on \(H\) if and only if Maker wins on \(A\).
\end{lemmaq}

\begin{lemmaq}[Reformulation of {\cite[Lemma~2.4]{DBLP:journals/dam/DucheneGINOPS25}}]\label{lem:pairing}
	Let \(H\) be a hypergraph in which all hyperedges are maximal and let \(v_1,v_2\in V(H)\) be such that the two sets \(S_i=\{e\setminus \{v_1,v_2\}: v_i\in e\in E(H)\}\) for \(i\in\{1,2\}\) are identical.
	Then Maker wins on \(H\) if and only if Maker wins on \(H^{+v_1}-v_2\).
\end{lemmaq}

The maximality condition is not stated in \cite{DBLP:journals/dam/DucheneGINOPS25} as the authors consider the \(H\)-game, where this condition is always satisfied.
With this condition, the proof from \cite{DBLP:journals/dam/DucheneGINOPS25} also applies to general hypergraphs.

\begin{theoremq}\label{thm:degree_4_chara}
	Let \(G\) be a claw-connected graph with \(\Delta(G)=4\) containing exactly one vertex \(u\) of degree~4.
	Set \(E^*=\{uv:v\in N(u)\}\).
	For \(e\in E^*\), let \(K_e\) be the connected component of \(e\) in \(\incG{G}-(E^*\setminus\{e\})\).
	Then Maker wins on \(\incG{G}\) if and only if one of the following conditions is satisfied:
	\begin{enumerate}[label=(\arabic*)]
		\item\label{item:claw-chara-3-in-comp} Three elements of \(E^*\) are in the same connected component of \(\incG{G}-\claws{E^*}\).
		\item\label{item:claw-chara-biased-win} There exists an edge \(e\in E^*\) such that Maker wins on \(\incG{G}[K_e]^{+e}\).
		\item\label{item:claw-chara-good-anchor} There exists a connected component \(K\) of \(\incG{G}-\claws{E^*}\) containing exactly two edges \(e_1,e_2\) of \(E^*\) such that Maker wins on \(\incG{G}+\{e_1,e_2\}\).\qedhere
	\end{enumerate}
\end{theoremq}

The formulation of \cref{thm:degree_4_chara} is slightly redundant:
Conditions \labelcref{item:claw-chara-3-in-comp,item:claw-chara-good-anchor} could be combined into a single condition that allows adding a new hyperedge whenever two elements of \(E^*\) are in the same connected component of \(\incG{G}-\claws{E^*}\).
While such a formulation is more succinct, this redundant formulation is more useful for obtaining an efficient algorithm, which is our goal.

\begin{proof}[Proof of~\cref{thm:degree_4_chara}]
	Set \(V\coloneqq V(G)\) and \(E\coloneqq E(G)\).
	Every edge \(e\in E\) is contained in some \(K_{e'}\) for an edge \(e'\in E^*\):
	Let \(P\) be a shortest path from \(e\) to \(E^*\) in \(\incG{G}\).
	Such a path exists as \(G\) is claw-connected.
	Let \(e'\) be the last edge-vertex of \(P\).
	Then \(e\) lies in \(K_{e'}\) because \(P\) cannot use a hyperedge from \(\claws{E^*}\) as otherwise it would contain another element of \(E^*\) contradicting the fact that it is a shortest path.
	
	\textbf{Sufficiency}.
	First, we will show that conditions~\labelcref{item:claw-chara-3-in-comp} to~\labelcref{item:claw-chara-good-anchor} are sufficient for Maker's win.
	The proof commonly uses the fact that if Maker claims an edge from \(E^*\) when neither Maker nor Breaker has claimed an edge from \(E^*\) before, Breaker must also claim an edge from \(E ^*\) to avoid losing.
	If Maker claims an edge from \(E^*\), but Breaker does not reply with an element of \(E^*\), Maker wins by claiming any available edges from \(E^*\) in the next two rounds.
	
	\textbf{Condition \labelcref{item:claw-chara-3-in-comp}}.
	Let \(e_0,e_1,e_2\in E^*\) be pairwise distinct and in the same connected component \(K\) of \(\incG{G}-\claws{E^*}\) such that the shortest paths \(P_1,P_2\) from \(e_0\) to \(e_1\) and from \(e_0\) to \(e_2\) minimize \({|V(P_1)|+|V(P_2)|}\) among all possible choices of such paths.
	Note that \(E(P_i)\cap E^*=\{e_0,e_i\}\) for all \(i\in\{1,2\}\), i.e.~the only edge-vertices from \(E^*\) used by hyperedges in \(E(P_i)\) are \(e_0\) and \(e_i\).
	No hyperedge in \(P_i\) except for the last one can contain elements of \(E^*\setminus\{e_0\}\) as otherwise we can obtain paths \(Q_i\) from \(e_0\) to \(e'_i\in E^*\), \(i\in\{1,2\}\), such that \({|Q_1|+|Q_2|<|P_1|+|P_2|}\), a contradiction.
	The only element of \(E^*\) contained in the last hyperedge of \(P_i\) is \(e_i\) as otherwise this hyperedge would be an element of \(\claws{E^*}\) and these hyperedges are not present in \(\incG{G}-\claws{E^*}\).
	Thus, the paths \(P_1\) and \(P_2\) are linear.
	
	Maker first picks \(e_0\).
	Breaker must respond by picking some edge \({a_1\in E^*}\) as discussed at the beginning of the proof.
	Regardless of the choice of \(a_1\), \({\incG{G[E^*]}-a_1}\) consists of a single hyperedge \(h\).
	At least one of \(e_1\) and \(e_2\), say \(e_1\), is not claimed by Breaker and is thus an element of \(h\).
	As the hyperedges in \(E(P_1)\) only intersect \(E^*\) in \(\{e_0,e_1\}\), the path \(P_1\) is unaffected by Breaker's choice of \(a_1\).
	Combining \(P_1\) with the hyperedge \(h\) yields a cycle \(C\) in \(\incG{G}-a_1\).
	In the terminology of~\cite{galliot2025makerbreakersolvedpolynomialtime}, the cycle \(C\) is an \(e_0\)-necklace.
	Thus, Maker wins the game by~\cref{thm:3_rounds_chara}.
	
	\textbf{Condition \labelcref{item:claw-chara-biased-win}}.
	Let \(e\in E^*\) be such that Maker wins on \(\incG{G}[K_e]^{+e}\).
	Maker wins on \(\incG{G}\) by first picking \(e\).
	Breaker must reply with some \(a\in E^*\) as discussed at the beginning of the proof.
	Then Maker wins by following a winning strategy for the game on \(\incG{G}[K_e]^{+e}\).
	This is possible because \(a\in E^*\setminus\{e\}\), so Breaker's move has no impact on the game on \(\incG{G}[K_e]^{+e}\).
	
	\textbf{Condition \labelcref{item:claw-chara-good-anchor}}.
	Let \(E^*\setminus\{e_1,e_2\}=\{e_3,e_4\}\) and set \(H\coloneqq\incG{G}+\{e_1,e_2\}\).
	We may assume that neither condition~\labelcref{item:claw-chara-3-in-comp} nor condition~\labelcref{item:claw-chara-biased-win} is satisfied.
	As condition~\labelcref{item:claw-chara-3-in-comp} is not satisfied, every connected component of \(\incG{G}-\claws{E^*}\) contains at most two elements of \(E^*\).
	For \(i\in\{1,2\}\), set \({R_i=K_{e_i}\setminus K_{e_{3-i}}}\).
	Then \(\incG{G}\) decomposes into \(H_1=\incG{G}-\left(R_i\setminus\{e_i\}\right)\) and \(H_2=\incG{G}[R_i]\) with \({V(H_1)\cap V(H_2)=\{e_i\}}\).
	As condition~\labelcref{item:claw-chara-biased-win} is not satisfied, Breaker wins on \(\incG{G}[K_{e_i}]^{+e_i}\), so Breaker also wins on \(H_2^{+e_i}\).
	By \cref{lem:preprocessing}, Maker wins on \(\incG{G}\) if and only if Maker wins on \(H_1\).
	Similarly, \(H\) decomposes into \(H'_1=H[V(H_1)]\) and \(H'_2=H[V(H_2)]\) with \({V(H'_1)\cap V(H'_2)=\{e_i\}}\).
	Observe that \(H'_1\cong H_1+\{e_1,e_2\}\) and \(H'_2\cong H_2\).
	By \cref{lem:preprocessing}, Maker wins on \(H\) if and only if Maker wins on \(H'_1\).
	We may therefore assume that \({\incG{G}=H_1}\), i.e.~\({K_{e_1}\Delta K_{e_2}=\{e_1,e_2\}}\), where \(K_{e_1}\Delta K_{e_2}\) denotes the symmetric difference.
	By an analogous argument, we may assume that \(K_{e_3}\Delta K_{e_4}=\{e_3,e_4\}\).
	Assume for contradiction that Breaker wins on \(\incG{G}\).
	We will show that Breaker then also wins on \(H\).
	We distinguish two cases.
	
	\textbf{Case A}.
	Assume that exactly one connected component of \(\incG{G}-\claws{E^*}\) contains two elements of \(E^*\).
	Then \(K_{e_3}\cap K_{e_4}=\emptyset\), so \(K_e=\{e\}\) for \(e\in\{e_3,e_4\}\).
	Thus, \(e_3\) and \(e_4\) are only involved in the four hyperedges \(\{e_1,e_2,e_3\}\), \(\{e_1,e_2,e_4\}\), \(\{e_1,e_3,e_4\}\) and \(\{e_2,e_3,e_4\}\).
	Set \(S_i=\{h\setminus\{e_3,e_4\}:e_i\in h\in E(\incG{G})\}\) for \(i\in\{3,4\}\).
	Then \(S_3=S_4\).
	By~\cref{lem:pairing}, Maker wins on \(\incG{G}\) if and only if Maker wins on \(H'\coloneqq\incG{G}^{+e_3}-e_4\).
	Thus, by our assumption, Breaker wins on \(H'\).
	Note that deleting the vertices \(e_3\) and \(e_4\) from \(H\) results in \(H'\).
	Then Breaker wins on \(H\) by pairing \(e_3\) and \(e_4\) in \(H\): 
	Breaker claims \(e_3\) if Maker claims \(e_4\) and vice versa.
	The hyperedges \(\{e_1,e_2,e_3\}\) and \(\{e_1,e_2,e_4\}\) may be removed from \(H\) as they are supersets of the hyperedge \(\{e_1,e_2\}\).
	Thus, \(e_3\) and \(e_4\) only appear in \(\{e_1,e_3,e_4\}\) and \(\{e_2,e_3,e_4\}\) which are both made irrelevant by pairing \(e_3\) and \(e_4\).
	Consequently, Breaker wins on \(H\), contradicting condition~\labelcref{item:claw-chara-good-anchor}.
	
	\textbf{Case B}.
	In the remaining case, the two connected components of \(\incG{G}-\claws{E^*}\) each contain exactly two elements of \(E^*\).
	Set \(K_1\coloneqq K_{e_1}\cup K_{e_2}\) and \(K_2\coloneqq K_{e_3}\cup K_{e_4}\).
	The sets \(K_1\) and \(K_2\) are a partition of \(E\).
	We say that \(e_1\) and \(e_2\) are conjugates of each other.
	By our assumption, Breaker wins on \(\incG{G}\), so let \(B\) be a winning strategy for Breaker on \(\incG{G}\).
	Let \(M\) be a winning strategy for Maker on \(H\) playing in such a way that, in any situation, its next move minimizes the maximum number of rounds required to achieve a nunchaku or necklace where the maximum is taken over all Breaker strategies.
	By~\cref{thm:3_rounds_chara}, \(M\) requires at most~3 rounds to claim a nunchaku or a necklace.
	As \(\incG{G}\) and \(H\) have the same vertices, we can also view \(B\) as a strategy on \(H\).
	Consider the first three rounds of the game of \(M\) against \(B\) on \(H\).
	
	In the following, let \(m_1,m_2,m_3\) be the edges claimed by \(M\) and \(a_1,a_2,a_3\) be the edges claimed by \(B\) in the first three rounds.
	Maker must claim an \(xy\)-nunchaku or an \(x\)-necklace in \(\incG{G}-\{a_1,a_2,a_3\}\), where \(x,y\in\{m_1,m_2,m_3\}\).
	As \(B\) is a winning strategy for the game on \(\incG{G}\), the nunchaku or the necklace must involve the new hyperedge \(\{e_1,e_2\}\).
	In particular, \(B\) claims neither \(e_1\) nor \(e_2\).
	If an edge-vertex \(e\in \{e_3,e_4\}\) is not claimed by Breaker, any nunchaku or necklace \(S\) in \(H\) yields a nunchaku or necklace in \(\incG{G}\) by either replacing the hyperedge \(\{e_1,e_2\}\) in \(S\) with \(\{e_1,e_2,e\}\) if \(e\) does not occur in \(S\) or by replacing the subpath from \(e_1\) to \(e\) over \(e_2\) in \(S\) with \(\{e_1,e_2,e\}\) contradicting the fact that \(B\) is a winning strategy on \(\incG{G}\).
	Thus, \(B\) claims both \(e_3\) and \(e_4\).
	Without loss of generality, let the indices be such that \(e_3\) is claimed before \(e_4\).
	We now give a brief overview of the remaining proof strategy.
	
	\begin{figure}[t]
		\centering
		\hspace{2mm}
		$\vcenter{\hbox{\includegraphics[scale=0.9]{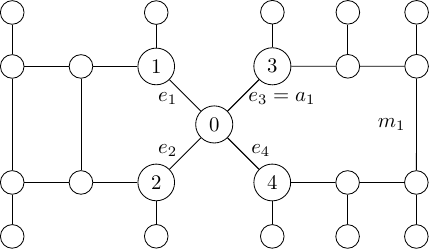}}}$
		\hspace{5mm}
		$\vcenter{\hbox{\includegraphics[scale=0.9]{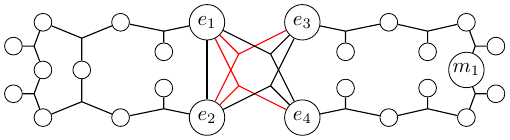}}}$
		\hspace{2mm}
		\caption{A graph \(G\) (left) satisfying condition~\labelcref{item:claw-chara-good-anchor} of~\cref{thm:degree_4_chara} and the hypergraph \(H=\incG{G}+\{e_1,e_2\}\) (right).
			Hyperedges are drawn in a claw-like way as in~\cite{kutz2004thesis,galliot2025makerbreakersolvedpolynomialtime}.
			The hyperedges which are a superset of \(\{e_1,e_2\}\) are colored red.
			Potential choices for the edges \(m_1,m_2,m_3\) and \(a_1,a_2\) in the game of \(M\) against \(B\) on \(H\) are also given.}
		\label{fig:round_overview}
	\end{figure}
	
	The graph from \cref{fig:round_overview} (left) shows a possible choice for Maker in the first round.
	The corresponding hypergraph is displayed on the right.
	The vertex \(0\) is the unique vertex with degree~4.
	The left half of the graph consists of the edges in \(K_1\) and the right half consists of the edges in \(K_2\).
	In the analysis of the first round, we show that \(m_1\in K_2\setminus E^*\) and \(a_1=e_3\).
	This splits the hypergraph into two components as the remaining element of \(\claws{E^*}\) is a superset of \(\{e_1,e_2\}\).
	We then use~\cref{lem:pairing} to obtain a contradiction.
	
	\textbf{First round}.
	Recall that \(K_1\) and \(K_2\) are a partition of \(E\).
	Assume for contradiction that \(m_1\in E^*\).
	As discussed above, \(B\) claims both \(e_3\) and \(e_4\) in the first three rounds, so \(m_2\in\{e_1,e_2\}\).
	Breaker must reply with an edge in \(E^*\) as discussed at the beginning of the proof.
	In fact, \(B\) must claim the conjugate of \(m_1\) to prevent Maker from claiming an \(m_1\)-necklace because \(e_1\) and \(e_2\) lie on a common cycle in \(\incG{G}\) as they are both elements of \(K_1\).
	This contradicts the fact that \(B\) claims neither \(e_1\) nor \(e_2\), so \(m_1\notin E^*\).
	
	Assume for contradiction that \(m_1\notin K_2\), i.e.~\(m_1\in K_1\setminus E^*\).
	If \(a_1\) separates \(m_1\) from \(\{e_1,e_2\}\), then Breaker wins on \(H[\concomp{H-a_1}{m_1}]^{+m_1}\) by playing according to \(B\) as this is a subhypergraph of \(\incG{G}^{+m_1}-a_1\), which does not contain the new hyperedge \(\{e_1,e_2\}\).
	In particular, \(M\) cannot claim a nunchaku or necklace in \(H[\concomp{H-a_1}{m_1}]^{+m_1}\).
	Thus, \(M\) must claim one of these structures in \(H-\concomp{H-a_1}{m_1}\) in the remaining two rounds.
	Then \(M\) could have claimed a winning structure in fewer rounds by playing only in \(H-\concomp{H-a_1}{m_1}\).
	This contradicts that \(M\) minimizes the rounds required to claim a nunchaku or necklace.
	Thus, \(a_1\) does not separate \(m_1\) from \(\{e_1,e_2\}\).
	We distinguish the cases \(a_1\neq e_3\) and \(a_1=e_3\).
	
	Assume first that \(a_1\neq e_3\).
	Then \(a_1\notin E^*\) as \(B\) claims neither \(e_1\) nor \(e_2\) and \(e_3\) is claimed before \(e_4\).
	Maker then wins against \(B\) on \(\incG{G}\) by claiming \(e_1\) in the next round.
	As discussed at the beginning of the proof, \(B\) must reply with an element of \(E^*\).
	As \(m_1\in K_1\), \(B\) cannot separate \(e_1\) from \(m_1\) in \(\incG{G}-a_1\) with an element of \(E^*\) resulting in an \(m_1e_1\)-nunchaku, contradicting the fact that \(B\) is a winning strategy on \(\incG{G}\).
	
	Next, assume \(a_1=e_3\).
	Then \(m_1\) does not lie on a cycle in \(\incG{G}\) as otherwise \(B\) would have allowed Maker to claim an \(m_1\)-necklace.
	Therefore, \(m_1\) does not lie on a cycle in \(H\) either.
	By claiming the first available edge-vertex \(a'_1\) on a shortest path from \(m_1\) to \(\{e_1,e_2\}\) in \(H\) instead of \(e_3\), Breaker separates \(m_1\) from \(\{e_1,e_2\}\).
	As before, \(M\) must then claim a winning structure in \(H-\concomp{H-a'_1}{m_1}\) within the remaining two rounds, so \(M\) could have skipped its first move. 
	Again, this contradicts the fact that \(M\) minimizes the number of rounds required to claim a nunchaku or necklace.
	We conclude that \(m_1\in K_2\setminus E^*\).
	
	Assume for contradiction that \(a_1\neq e_3\).
	Similar to the above analysis, \(a_1\) must separate \(\{e_3,e_4\}\) from \(m_1\).
	But this again implies that \(M\) is able to claim a nunchaku or necklace even if it skipped its first turn by an analogous argument.
	This contradicts the choice of \(M\).
	Thus, \(a_1\in\{e_3,e_4\}\), so \(a_1=e_3\) by the assumption on the indices.
	
	\textbf{Obtaining the contradiction}.
	After the first round, the current state of the game is captured by the hypergraph \(H'=H^{+m_1}-a_1\).
	As an example, \cref{fig:round_two_overview} shows the updated hypergraph resulting from the hypergraph in \cref{fig:round_overview}.
	As \(a_1\) was removed from \(H\), the only element of \(\claws{E^*}\) in \(H'\) is \(\{e_1,e_2,e_4\}\).
	However, this is a superset of \(\{e_1,e_2\}\), so Maker wins on \(H'\) if and only if Maker wins on the hypergraph \(H''\) obtained from \(H'\) by removing \(\{e_1,e_2,e_4\}\).
	Because \(M\) is a winning strategy for Maker on \(H\), Maker wins on \(H'\), so Maker also wins on \(H''\).
	The hypergraph \(H''\) decomposes into \(H_1=H''[K_1]\) and \(H_2=H''-K_1\).
	Note that \(H_1\cong H[K_1]\) and \(H_2\subseteq\incG{G}^{+m_1}-a_1\).
	As \(B\) is a winning strategy on \(\incG{G}\), Breaker wins on \(\incG{G}^{+m_1}-a_1\) by continuing to play according to \(B\).
	Thus, Breaker wins on \(H_2\).
	As Maker wins on \(H''\), Maker must win on \(H_1\).
	
	\begin{figure}[t]
		\centering
		\includegraphics{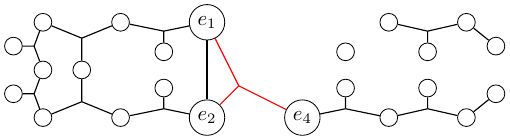}
		\caption{The updated hypergraph \(H^{+m_1}-a_1\) after the first round.}
		\label{fig:round_two_overview}
	\end{figure}
	
	Let \(H_3=\incG{G}[K_1\cup\{e_3,e_4\}]\).
	In \(H_3\), \(e_3\) and \(e_4\) only occur in the four hyperedges from \(\claws{E^*}\).
	Set \(S_i=\{h\setminus\{e_3,e_4\}:e_i\in h\in E(\incG{G})\}\) for \(i\in\{3,4\}\).
	Then \(S_3\) and \(S_4\) are identical.
	By~\cref{lem:pairing}, Maker wins on \(H_3\) if and only if Maker wins on \(H_3^{+e_3}-e_4\cong H_1\).
	Thus, Maker wins on \(H_3\).
	Then Maker also wins on \(\incG{G}\) as \(H_3\subseteq\incG{G}\), contradicting the assumption that Breaker win on \(\incG{G}\).
	
	\textbf{Necessity}.
	Lastly, we show that one of the conditions~\labelcref{item:claw-chara-3-in-comp} to~\labelcref{item:claw-chara-good-anchor} is necessary for Maker's win.
	If there exists a connected component of \(\incG{G}-\claws{E^*}\) containing exactly two elements \(e_1\) and \(e_2\) of \(E^*\), Breaker wins on \(\incG{G}+\{e_1,e_2\}\) as condition~\labelcref{item:claw-chara-good-anchor} is not satisfied.
	In particular, Breaker also wins on \(\incG{G}\).
	Otherwise, every connected component of \(\incG{G}-\claws{E^*}\) contains exactly one element of \(E^*\) as condition~\labelcref{item:claw-chara-3-in-comp} is not satisfied.
	Then \(\incG{G}\) decomposes into \(H_1=\incG{G}-\left(K_e\setminus\{e\}\right)\) and \(H_2=\incG{G[K_e]}\) with \({V(H_1)\cap V(H_2)=\{e\}}\) for any \(e\in E^*\).
	Breaker wins on \(\incG{G[K_e]}^{+e}\) for all \(e\in E^*\) as condition~\labelcref{item:claw-chara-biased-win} is not satisfied.
	By~\cref{lem:preprocessing}, Maker wins on \(\incG{G}\) if and only if Maker wins on \(H_1\).
	Consequently, we may assume that \(K_e=\{e\}\) for any \(e\in E^*\), i.e.~\(G\iso K_{1,4}\).
	Breaker clearly wins on \(\incG{K_{1,4}}\) because this hypergraph has only four edge-vertices.
\end{proof}

This completes our characterization of the unbiased claw game on claw-connected graphs \(G\) with \(\Delta(G)\geq4\).
If \(\Delta(G)\geq5\), then Maker wins by \cref{prop:deg5_trivial}.
Otherwise, \(\Delta(G)=4\).
If \(G\) contains two vertices of degree~4, Maker wins by \cref{lem:two_deg_4}.
This leaves only claw-connected graphs containing exactly one vertex with degree~4, which are covered by \cref{thm:degree_4_chara}.

\subsection{Reducing to Linear Hypergraphs}
In this section, we prove the final results required for our algorithmic results, \cref{cor:degree4_edge_case_1,cor:degree4_edge_case_2}. 
In the unbiased claw game, the presence of vertices with degree at least~4 enforces that \(\incG{G}\) is non-linear.
This prevents the use of efficient algorithms for the Maker-Breaker game on linear hypergraphs with rank at most~3, see~\cite{kutz2004thesis,DBLP:journals/corr/abs-2607-18569}, as subroutines.
Instead, slower algorithms for the Maker-Breaker game on general hypergraphs with rank at most~3 have to be used, see~\cite{galliot2025makerbreakersolvedpolynomialtime}.

With \cref{cor:degree4_edge_case_1,cor:degree4_edge_case_2}, we reduce the outcome of the unbiased claw game to the outcome of a constant number of Maker-Breaker games on linear hypergraphs with rank at most~3 and size bounded by the original game.
The size of these hypergraphs is bounded by the size of the graph.
\Cref{cor:degree4_edge_case_1} addresses the case where exactly one connected component of \(\incG{G}-\claws{E^*}\) contains exactly two elements of \(E^*\).
\Cref{cor:degree4_edge_case_2} covers the case where two connected components of \(\incG{G}-\claws{E^*}\) contain exactly two elements of \(E^*\).

\begin{corollary}\label{cor:degree4_edge_case_1}
	Let \(G\) be a claw-connected graph with \(\Delta(G)=4\) containing exactly one vertex \(u\) of degree~4.
	Set \(E^*=\{uv:v\in N(u)\}\).
	For \(e\in E^*\), let \(K_e\) be the connected component of \(e\) in \({\incG{G}-(E^*\setminus\{e\})}\).
	Assume further that Breaker wins on \(\incG{G}[K_e]^{+e}\) for every \(e\in E^*\) and exactly one connected component of \({\incG{G}-\claws{E^*}}\) contains exactly two elements of \(E^*\).
	Let \(e_1\) and \(e_2\) be those two elements.
	Then Maker wins on \(\incG{G}\) if and only if Maker wins on \(\incG{G}[K_{e_1}\cup K_{e_2}]+\{e_1,e_2\}\).
\end{corollary}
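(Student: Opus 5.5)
The plan is to read this off \cref{thm:degree_4_chara} together with the reductions already carried out inside its proof (the treatment of condition~\labelcref{item:claw-chara-good-anchor} and Case~A).

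\textbf{Step 1: only condition~\labelcref{item:claw-chara-good-anchor} can apply.} Under the hypotheses, condition~\labelcref{item:claw-chara-biased-win} fails because Breaker wins on every \(\incG{G}[K_e]^{+e}\). Condition~\labelcref{item:claw-chara-3-in-comp} also fails. Exactly one component of \(\incG{G}-\claws{E^*}\) contains two elements of \(E^*\), and the two remaining elements \(e_3,e_4\) must then lie in singleton-type components. A component with three elements would be an additional component containing at least two elements, which is excluded. Hence, by \cref{thm:degree_4_chara}, Maker wins on \(\incG{G}\) if and only if Maker wins on \(H\coloneqq\incG{G}+\{e_1,e_2\}\). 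It therefore suffices to show that Maker wins on \(H\) if and only if Maker wins on \(H^\circ\coloneqq\incG{G}[K_{e_1}\cup K_{e_2}]+\{e_1,e_2\}\).

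\textbf{Step 2: strip off \(K_{e_3}\) and \(K_{e_4}\).} For \(e\in\{e_3,e_4\}\), the hypergraph \(H\) splits at the single vertex \(e\) into \(H[K_e]\) and the rest, since \(K_e\) meets the rest of \(H\) only in \(e\). The new hyperedge \(\{e_1,e_2\}\) does not touch \(K_e\). Breaker wins on \(\incG{G}[K_e]^{+e}\), so \cref{lem:preprocessing} lets us delete \(K_e\setminus\{e\}\) without changing the winner. After doing this for both \(e_3\) and \(e_4\), we obtain \(\incG{G}[K_{e_1}\cup K_{e_2}\cup\{e_3,e_4\}]+\{e_1,e_2\}\).

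I would also strip the parts \(R_i=K_{e_i}\setminus K_{e_{3-i}}\) in the same way, as the proof of the theorem does. However, these parts are contained in \(K_{e_1}\cup K_{e_2}\) and therefore also appear in \(H^\circ\). So either both sides keep them, or both are reduced identically; this part is harmless.

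\textbf{Step 3: remove \(e_3,e_4\) (the main obstacle).} In the reduced hypergraph, \(e_3\) and \(e_4\) occur only in the four hyperedges of \(\claws{E^*}\). The two hyperedges \(\{e_1,e_2,e_3\}\) and \(\{e_1,e_2,e_4\}\) contain \(\{e_1,e_2\}\), so they can be dropped without changing the outcome. What remains involving \(e_3,e_4\) is \(\{e_1,e_3,e_4\}\) and \(\{e_2,e_3,e_4\}\).

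\emph{Breaker's direction.} Suppose Breaker wins on \(H^\circ\). Breaker then pairs \(e_3\) with \(e_4\) and plays his \(H^\circ\) strategy on the rest. Any winning set Maker completes would have to contain both \(e_3\) and \(e_4\), or lie inside \(H^\circ\), and both are impossible.

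\emph{Maker's direction.} Suppose Maker wins on \(H^\circ\). Then Maker wins on the larger hypergraph, since adding vertices and hyperedges only helps Maker (Maker treats \(e_3,e_4\) as dummy moves).

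Alternatively, Step 3 can be done with \cref{lem:pairing}, since \(S_3=S_4\) after the superset hyperedges are removed. This yields the hypergraph with \(e_3\) marked and \(e_4\) deleted, where the edges \(\{e_1,e_4\}\)-type traces vanish.

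The delicate point is making sure that removing superset hyperedges and then applying the pairing argument is legitimate in the presence of the 2-edge \(\{e_1,e_2\}\). I expect this to go through exactly as in Case~A of the proof of \cref{thm:degree_4_chara}.
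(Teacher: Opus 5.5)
Your proposal is correct and follows the paper's route. \cref{thm:degree_4_chara} reduces the question to \(\incG{G}+\{e_1,e_2\}\), \cref{lem:preprocessing} collapses \(K_{e_3}\) and \(K_{e_4}\), and the strict supersets of \(\{e_1,e_2\}\) are discarded.

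The real difference is your Step~3, and there you are more careful than the paper. The paper ends by asserting that, once the strict supersets of \(\{e_1,e_2\}\) are removed, \(e_3\) and \(e_4\) are isolated. That is false, because \(\{e_1,e_3,e_4\}\) and \(\{e_2,e_3,e_4\}\) survive. Your pairing argument closes exactly this step, and it mirrors Case~A of the theorem's proof. Equivalently, you can apply \cref{lem:pairing} with \(S_3=S_4=\{\{e_1\},\{e_2\}\}\), which is valid now that all remaining hyperedges are maximal.

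Your ordering is also slightly cleaner. You invoke the theorem on \(G\) itself and only then split off \(K_{e_3},K_{e_4}\) inside \(\incG{G}+\{e_1,e_2\}\), which is legitimate since the new 2-edge avoids them. The paper reduces first, so it must implicitly re-check that the reduced instance still satisfies the theorem's hypotheses.

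You should drop the aside about also stripping the sets \(R_i\). Nothing in your argument uses it. Moreover, the single-vertex split at \(e_i\) it would require is not automatic: a claw centered at the far endpoint of \(e_i\) can contain both an edge of \(R_i\) and an edge of \(K_{e_1}\cap K_{e_2}\).
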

\begin{proof}
	For \(e\in\{e_3,e_4\}\), \(\incG{G}\) decomposes into \(H_1=\incG{G}-(K_e\setminus\{e\})\) and \(H_2=\incG{G}[K_e]\) with \({V(H_1)\cap V(H_2)=\{e\}}\).
	Breaker wins on \(\incG{G}[K_e]^{+e}\) by assumption.
	By \cref{lem:preprocessing}, Maker wins on \(\incG{G}\) if and only if Maker wins on \(H_1\), so we may assume that \(K_e=\{e\}\).
	By \cref{thm:degree_4_chara}, Maker wins on \(\incG{G}\) if and only if Maker wins on \(\incG{G}+\{e_1,e_2\}\).
	Observe that any hyperedge in \(\incG{G}+\{e_1,e_2\}\) that is a strict superset of \(\{e_1,e_2\}\) can be removed without changing the outcome of the game.
	In the resulting hypergraph, \(e_3\) and \(e_4\) are isolated vertices, so they may be removed as well.
	This results in the hypergraph \(\incG{G}[K_{e_1}\cup K_{e_2}]+\{e_1,e_2\}\).
\end{proof}

\begin{corollary}\label{cor:degree4_edge_case_2}
	Let \(G\) be a claw-connected graph with \(\Delta(G)=4\) containing exactly one vertex \(u\) of degree~4.
	Set \(E^*=\{e_1,e_2,e_3,e_4\}=\{uv:v\in N(u)\}\).
	Assume further that every connected component of \({H=\incG{G}-\claws{E^*}}\) contains exactly two elements of \(E^*\).
	Let \(\concomp{H}{e_1}=\concomp{H}{e_2}\) and \(\concomp{H}{e_3}=\concomp{H}{e_4}\).
	Then Maker wins on \(\incG{G}\) if and only if Maker wins on \(\incG{G}+\{\{e_1,e_2\},\{e_3,e_4\}\}\).
\end{corollary}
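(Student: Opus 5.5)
The direction from \(\incG{G}\) to the augmented hypergraph is immediate. Adding hyperedges to a hypergraph can only help Maker, so if Maker wins on \(\incG{G}\), Maker also wins on \(\incG{G}+\{\{e_1,e_2\},\{e_3,e_4\}\}\). For the converse, the plan is to reduce to \cref{thm:degree_4_chara}. Condition~\labelcref{item:claw-chara-3-in-comp} fails by assumption, so it suffices to show that Maker wins on \(\incG{G}+\{e_1,e_2\}\) or on \(\incG{G}+\{e_3,e_4\}\). Since \(\concomp{H}{e_1}=\concomp{H}{e_2}\) and \(\concomp{H}{e_3}=\concomp{H}{e_4}\), condition~\labelcref{item:claw-chara-good-anchor} then yields Maker's win on \(\incG{G}\).

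Set \(H^\ast\coloneqq\incG{G}+\{\{e_1,e_2\},\{e_3,e_4\}\}\) and assume Maker wins on \(H^\ast\). The first step is to discard redundant hyperedges. Every hyperedge in \(\claws{E^*}\) is a \(3\)-subset of \(E^*=\{e_1,e_2,e_3,e_4\}\), so it contains \(\{e_1,e_2\}\) or \(\{e_3,e_4\}\). Any such strict superset of an existing hyperedge can be deleted without changing the winner, as already used in the proofs of \cref{thm:degree_4_chara} and \cref{cor:degree4_edge_case_1}. The resulting hypergraph is \(H+\{\{e_1,e_2\},\{e_3,e_4\}\}\).

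The second step is to show that this hypergraph is a disjoint union of two pieces. Let \(K_1\coloneqq\concomp{H}{e_1}\) and \(K_2\coloneqq\concomp{H}{e_3}\). Every edge of \(G\) lies in some \(K_e\), by the argument at the start of the proof of \cref{thm:degree_4_chara}. Hence every component of \(H\) contains an element of \(E^*\), and so \(K_1\) and \(K_2\) partition \(E\). The hypergraph therefore splits into the vertex-disjoint pieces \(H[K_1]+\{e_1,e_2\}\) and \(H[K_2]+\{e_3,e_4\}\).

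Next I use the standard fact that Maker wins on a vertex-disjoint union if and only if he wins on one of the parts. If Breaker wins each part (as the second player), he always answers in the part Maker just played in, playing arbitrarily when that part is exhausted, and thus wins the union. Consequently Maker wins on, say, \(H[K_1]+\{e_1,e_2\}\); the other case is symmetric.

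Finally, \(H[K_1]\) is a subhypergraph of \(\incG{G}\), so \(H[K_1]+\{e_1,e_2\}\) is a subhypergraph of \(\incG{G}+\{e_1,e_2\}\). Maker therefore wins on \(\incG{G}+\{e_1,e_2\}\) by playing his strategy on the subhypergraph and ignoring the other vertices. Condition~\labelcref{item:claw-chara-good-anchor} of \cref{thm:degree_4_chara} then shows that Maker wins on \(\incG{G}\). I expect no real obstacle; the only points needing care are the reduction that deletes superset hyperedges and the verification that \(K_1\) and \(K_2\) cover all of \(E\), which is exactly where claw-connectedness is used.
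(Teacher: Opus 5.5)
Your proposal is correct and matches the paper's proof essentially step for step: the forward direction by monotonicity, then deleting the hyperedges of \(\claws{E^*}\) as strict supersets of \(\{e_1,e_2\}\) or \(\{e_3,e_4\}\), splitting the result into the two components of \(H\), using that Maker must win one part of a disjoint union, and concluding via subhypergraph monotonicity and condition~(3) of \cref{thm:degree_4_chara}. Your extra argument that the two components cover \(E\) is harmless but not needed, since it already follows from the hypothesis that every component of \(H\) contains exactly two elements of \(E^*\).
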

\begin{proof}
	As additional hyperedges only provide Maker with more opportunities to win, Maker wins on \(\incG{G}+\{\{e_1,e_2\},\{e_3,e_4\}\}\) if Maker wins on \(\incG{G}\).
	Assume Maker wins on \(\incG{G}+\{\{e_1,e_2\},\{e_3,e_4\}\}\).
	Observe that any hyperedge in \(\incG{G}+\{\{e_1,e_2\},\{e_3,e_4\}\}\) that is a strict superset of \(\{e_1,e_2\}\) or \(\{e_3,e_4\}\) can be removed without changing the outcome of the game.
	Let \(H\) be the resulting hypergraph.
	This hypergraph decomposes into \(H_1=H[\concomp{H}{e_1}]\) and \(H_2=H[\concomp{H}{e_3}]\).
	If Breaker wins on both \(H_1\) and \(H_2\), Breaker also wins on \(H\) by playing separately in each connected component.
	As Maker wins on \(H\), one of \(H_1\) and \(H_2\) must be Maker's win, say \(H_1\).
	Because \({H_1\subseteq\incG{G}+\{e_1,e_2\}}\), Maker wins on \(\incG{G}\) by condition~\labelcref{item:claw-chara-good-anchor} of \cref{thm:degree_4_chara}.
\end{proof}

\section{Algorithmic Results}\label{sec:alg}
In this section, we use the results of \cref{sec:analysis} to design an algorithm with running time \(\Oh{|V(G)|}\) for the unbiased claw game played on a claw-connected graph \(G\) with maximum degree different from~3, see \cref{thm:claw_con_alg}.
We then use this algorithm to construct a linear-time decomposition algorithm which either outputs the winner of the unbiased claw game or decomposes the graph into edge-disjoint claw-connected subgraphs with maximum degree exactly~3 such that the winner of the game can be determined from the winners of the games on the subgraphs, see \cref{cor:decomp_alg}.
This allows settling the time complexity of the unbiased claw game in future work by analyzing only claw-connected graphs with maximum degree~3.
As an application of this decomposition, we obtain an \(\Oh{|V(G)|^3}\)-algorithm for the unbiased claw game on general graphs.
We require the following lemma:

\begin{figure}[t]
	\begin{enumerate}[label=\arabic*, widest=11, align=right]
		\item[]\textbf{Input}: A claw-connected graph \(G\) with \(\Delta(G)\neq3\).
		\item[]\textbf{Output}: The winner of the Maker-Breaker-game on \(\incG{G}\).
		\item\label{dn4:deg5} Iterate over \(E(G)\) to determine the degree of every vertex.
		During the iteration, if any vertex is found to have degree at least~5 due to the edges enumerated so far, abort the loop and output ``Maker''.
		\item Iterate over the vertices to determine \(\Delta(G)\) and the number of vertices with degree~4.
		\item\label{dn4:deg2} If \(\Delta(G)\leq2\), output ``Breaker``.
		\item\label{dn4:two_deg4} If there are at least two vertices with degree~4, output ``Maker''.
		\item Determine the set \(E^*\) of edges incident to the unique degree~4 vertex.
		\item\label{dn4:hypergraph1} Initialize \(\incG{G}\coloneqq(E(G),\emptyset)\).
		\item\label{dn4:hypergraph2} Iterate over every vertex \(v\) and add a hyperedge for every claw using only edges incident to \(v\).
		\item Compute the connected components \(K_1,\dots,K_\ell\) of \(\incG{G}-\claws{E^*}\) using breadth-first search.
		\item\label{dn4:3_in_conp} If any of these components contains at least~3 elements of \(E^*\), output ``Maker''.
		\item\label{dn4:cond_2} For every edge \(e\in E^*\), compute \(K_e\) as defined in \cref{thm:degree_4_chara} using breadth-first search starting from \(e\) in \(\incG{G}-\left(E^*\setminus\{e\}\right)\) and set \(H_e\coloneqq\incG{G}[K_e]^{+e}\).
		Use~\cref{lem:alg-non-3-uniform} to determine the winner of the game on \(H_e\).
		If Maker wins on \(H_e\), output ``Maker''.
		\item\label{dn4:simple} If every connected component \(K_1,\dots,K_\ell\) contains only one element of \(E^*\), output ``Breaker''.
		\item\label{dn4:edgecase_1} If exactly one connected component of \(K_1,\dots,K_\ell\) contains two elements \(e_1\) and \(e_2\) of \(E^*\), compute the hypergraph \(H=(H_{e_1}\cup H_{e_2})+\{e_1,e_2\}\).
		Use~\cref{lem:alg-non-3-uniform} to determine the winner of the game on \(H\) and output it.
		\item\label{dn4:edgecase_2} Let \(E^*=\{e_1,e_2,e_3,e_4\}\) with \(K_{e_1}\cap K_{e_2}\neq\emptyset\) and \(K_{e_3}\cap K_{e_4}\neq\emptyset\).
		Compute \(H_1=(H_{e_1}\cup H_{e_2})+\{e_1,e_2\}\) and \(H_2=(H_{e_3}\cup H_{e_4})+\{e_3,e_4\}\).
		Use~\cref{lem:alg-non-3-uniform} to determine the winner of the game on \(H_1\) and \(H_2\) and output ``Maker'' if Maker wins at least one of the games.
		Otherwise, output ``Breaker''.
	\end{enumerate}
	\renewcommand\figurename{Algorithm}
	\setcounter{enumi}{\value{figure}}
	\setcounter{figure}{0}
	\caption{A linear-time algorithm for the unbiased claw game.}
	\label[algorithm]{alg:dn4}
	\setcounter{figure}{\value{enumi}}
\end{figure}

\begin{lemmaq}[{\cite[Corollary~4.10]{DBLP:journals/corr/abs-2607-18569}}]\label{lem:alg-non-3-uniform}
	Let \(H\) be a connected linear hypergraph of rank at most~3 that is not 3-uniform.
	Then the winner of the Maker-Breaker game on \(H\) can be computed in time \(\Oh{|V(H)|+|E(H)|}\).
\end{lemmaq}

\begin{theorem}\label{thm:claw_con_alg}
	\Cref{alg:dn4} correctly decides the outcome of the unbiased claw game in time \(\Oh{|V(G)|}\).
\end{theorem}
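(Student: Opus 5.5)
The proof splits into correctness and running time. We follow the steps of \cref{alg:dn4} in order and match each output to an earlier result.

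\textbf{Correctness.} Step~\ref{dn4:deg5} outputs ``Maker'' only if some vertex has degree at least~5, which is justified by \cref{prop:deg5_trivial}. If \(\Delta(G)\leq2\), then \(G\) contains no claw, so \(\incG{G}\) has no hyperedges and Breaker wins. Since \(\Delta(G)\neq3\), the only remaining case is \(\Delta(G)=4\). Step~\ref{dn4:two_deg4} is then \cref{lem:two_deg_4}, which applies because \(G\) is claw-connected.

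With a unique degree~4 vertex \(u\), the remaining steps follow \cref{thm:degree_4_chara}:
\begin{itemize}
\item Step~\ref{dn4:3_in_conp} checks condition~\labelcref{item:claw-chara-3-in-comp}.
\item Step~\ref{dn4:cond_2} checks condition~\labelcref{item:claw-chara-biased-win}.
\item If every component contains one element of \(E^*\), condition~\labelcref{item:claw-chara-good-anchor} cannot hold, so the output ``Breaker'' is correct by the necessity direction.
\item Steps~\ref{dn4:edgecase_1} and~\ref{dn4:edgecase_2} use \cref{cor:degree4_edge_case_1} and \cref{cor:degree4_edge_case_2}, together with the decomposition in the proof of \cref{cor:degree4_edge_case_2}. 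There one checks that \((H_{e_1}\cup H_{e_2})+\{e_1,e_2\}\) equals \(\incG{G}[K_{e_1}\cup K_{e_2}]+\{e_1,e_2\}\) after marking. Marking \(e_1,e_2\) does not matter, because the added pair hyperedge \(\{e_1,e_2\}\) is what encodes them.
\end{itemize}
I would make this identification explicit in each case.

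\textbf{Applicability of \cref{lem:alg-non-3-uniform}.} This is the main technical point. Each hypergraph passed to the lemma must be linear, of rank at most~3, not 3-uniform, and connected. Two claws share two edges only if they have the same centre and that centre has degree at least~4, which here means the centre is \(u\). In \(K_e\) with \(E^*\setminus\{e\}\) deleted, every claw centred at \(u\) has been removed, since each such claw uses at least two elements of \(E^*\). Hence \(\incG{G}[K_e]\) consists of claws centred at degree-3 vertices, and these pairwise intersect in at most one edge-vertex. Therefore \(H_e\) is linear.

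\begin{itemize}
\item \textbf{Not 3-uniform.} Marking \(e\) creates 2-element hyperedges, provided \(e\) lies in some hyperedge. If \(K_e=\{e\}\), the game on \(H_e\) is trivially a Breaker win and can be handled directly.
\item \textbf{Connected.} \(H_e\) is connected after removing isolated vertices.
\item \textbf{Linear after adding \(\{e_1,e_2\}\).} We need that no hyperedge of \(H_{e_1}\cup H_{e_2}\) contains both \(e_1\) and \(e_2\). Such a hyperedge would be a claw containing two elements of \(E^*\), hence centred at \(u\), and these claws are excluded.
\end{itemize}

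\textbf{Running time.} After step~\ref{dn4:deg5}, the algorithm aborts as soon as a degree reaches~5, so at most \(2|V(G)|\) edges are scanned before \(\Delta\leq4\) is certified, giving \(|E(G)|\leq 2|V(G)|\). Each vertex of degree \(d\leq4\) is the centre of at most \(\binom{d}{3}\leq4\) claws, so \(|E(\incG{G})|=\Oh{|V(G)|}\). Consequently:
\begin{itemize}
\item each breadth-first search runs in linear time, and there is a constant number of them (four for the sets \(K_e\), plus one more);
\item building each \(H_e\) and each combined hypergraph takes linear time;
\item each of the constantly many calls to \cref{lem:alg-non-3-uniform} runs in linear time.
\end{itemize}
The total running time is therefore \(\Oh{|V(G)|}\).
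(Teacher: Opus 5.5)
Your step-by-step matching of the algorithm to \cref{prop:deg5_trivial}, \cref{lem:two_deg_4}, the three conditions of \cref{thm:degree_4_chara} and the two corollaries is the paper's argument, and so is your running-time count. The one place where you go beyond the paper is wrong. For steps~\labelcref{dn4:edgecase_1,dn4:edgecase_2} you claim that \((H_{e_1}\cup H_{e_2})+\{e_1,e_2\}\) agrees with \(\incG{G}[K_{e_1}\cup K_{e_2}]+\{e_1,e_2\}\) ``after marking'', and that marking \(e_1,e_2\) does not matter.

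Marking does matter: it hands both \(e_1\) and \(e_2\) to Maker for free. It also means \(\{e_1,e_2\}\) is no longer a subset of the vertex set, so \(+\{e_1,e_2\}\) is not even defined. Here is a counterexample:
\begin{itemize}
\item Let \(u\) be adjacent to \(v_1,\dots,v_4\), add the edges \(v_1v_2\), \(v_1a\), \(v_2b\), and set \(e_i=uv_i\).
\item Exactly one component of \(\incG{G}-\claws{E^*}\) contains two elements of \(E^*\), namely \(e_1,e_2\).
\item \(H_{e_1}=\{\{v_1v_2,v_1a\}\}\) and \(H_{e_2}=\{\{v_1v_2,v_2b\}\}\) are both Breaker wins, so the algorithm reaches step~\labelcref{dn4:edgecase_1}.
\item Their union is a Maker win: Maker claims \(v_1v_2\).
\item However, \(\{\{e_1,v_1v_2,v_1a\},\{e_2,v_1v_2,v_2b\},\{e_1,e_2\}\}=\incG{G}[K_{e_1}\cup K_{e_2}]+\{e_1,e_2\}\) is a Breaker win, and so is \(G\) itself; a short case check confirms this.
\end{itemize}
So your justification would certify a wrong output.

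The correct reading, which the paper's proof uses tacitly when it cites the corollaries, is that steps~\labelcref{dn4:edgecase_1,dn4:edgecase_2} must use the unmarked \(\incG{G}[K_{e_i}]\). What then needs checking is \(\incG{G}[K_{e_1}]\cup\incG{G}[K_{e_2}]=\incG{G}[K_{e_1}\cup K_{e_2}]\). This holds for the following reasons:
\begin{itemize}
\item \(K_{e_1}\cup K_{e_2}\) is exactly the component of \(\incG{G}-\claws{E^*}\) containing \(e_1,e_2\).
\item Every hyperedge inside this component is a claw not centred at \(u\), so it contains at most one element of \(E^*\).
\item Such a hyperedge is therefore a hyperedge of \(\incG{G}-(E^*\setminus\{e_i\})\) for some \(i\), and so lies in \(K_{e_1}\) or in \(K_{e_2}\).
\end{itemize}
With this identity, your linearity argument for adding \(\{e_1,e_2\}\) goes through. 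Connectivity of the combined hypergraph, which you do not address, also follows because \(K_{e_1}\cup K_{e_2}\) is a component.

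A smaller point: you assert without reason that \(H_e\) is connected. It is true. The other endpoint of \(e\) has degree at most~3, so \(e\) lies in at most one hyperedge of \(\incG{G}[K_e]\), and marking it cannot disconnect anything. The paper only uses the weaker fact that every component of \(H_e\) contains a 2-element hyperedge.
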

\begin{proof}
	If \cref{alg:dn4} terminates in step~\labelcref{dn4:deg5}, the output is correct by~\cref{prop:deg5_trivial}.
	After this step, every vertex has degree at most~4.
	If \(\Delta(G)\leq2\), Breaker wins on \(\incG{G}\) as there are no claws in \(G\), so step~\labelcref{dn4:deg2} is correct.
	At this point, it must be the case that \(\Delta(G)=4\) as \(\Delta(G)\neq3\).
	If \(G\) contains at least two vertices of degree~4, Maker wins on \(\incG{G}\) by \cref{lem:two_deg_4}, so step~\labelcref{dn4:two_deg4} is correct.
	This leaves only the case where \(G\) contains exactly one vertex \(u\) of degree~4.
	
	Step~\labelcref{dn4:3_in_conp} is correct by condition~\labelcref{item:claw-chara-3-in-comp} of \cref{thm:degree_4_chara}.
	If the algorithm does not terminate in step~\labelcref{dn4:3_in_conp}, every connected component contains at most~2 elements of \(E^*\).
	Observe that the hypergraph \(H_e\) from step~\labelcref{dn4:cond_2} is linear and every connected component of \(H_e\) contains a hyperedge of size~2, so~\cref{lem:alg-non-3-uniform} is indeed applicable.
	If Maker wins on \(H_e\) for any \(e\in E^*\), Maker wins on \(\incG{G}\) by condition~\labelcref{item:claw-chara-biased-win} of \cref{thm:degree_4_chara}, so this step is correct.
	Otherwise, only condition~\labelcref{item:claw-chara-good-anchor} can lead to Maker's win.

	If every connected component \(K_1,\dots,K_\ell\) contains only one element of \(E^*\), condition~\labelcref{item:claw-chara-good-anchor} is not satisfied and so Breaker wins by~\cref{thm:degree_4_chara}, so step~\labelcref{dn4:simple} is correct.
	If exactly one connected component of \(K_1,\dots,K_\ell\) contains two elements \(e_1\) and \(e_2\) of \(E^*\), Maker wins on \(\incG{G}\) if and only if Maker wins on \(H\) by \cref{cor:degree4_edge_case_1}, so step~\labelcref{dn4:edgecase_1} is correct.
	The hypergraph \(H\) is linear, connected and not 3-regular, so~\cref{lem:alg-non-3-uniform} is applicable.
		
	This leaves the case where the two connected components of \(\incG{G}-\claws{E^*}\) each contain two elements of \(E^*\).
	By \cref{cor:degree4_edge_case_2}, Maker wins on \(\incG{G}\) if and only if Maker wins on \(\incG{G}+\{\{e_1,e_2\},\{e_3,e_4\}\}\).
	Removing the non-maximal hyperedges \(\claws{E^*}\) results in the two connected components \(H_1\) and \(H_2\).
	These hypergraphs are linear and contain a hyperedge of size~2, so step~\labelcref{dn4:edgecase_2} is also correct.
	
	It remains to show that the running time is \(\Oh{|V(G)|}\).
	As the loop in step~\labelcref{dn4:deg5} is aborted once a vertex with degree~5 is found, at most \(2|V(G)|+1\) edges are inspected by the pigeonhole principle.
	After this step, every vertex has degree at most~4, so \(|E(G)|\leq2|V(G)|\).
	This also implies that every vertex lies on at most~4 claws, so \(|\claws{G}|\leq4|V(G)|\) and \(|V(\incG{G})|+|E(\incG{G})|\leq6|V(G)|\).
	Consequently, computing \(\incG{G}\) in steps~\labelcref{dn4:hypergraph1,dn4:hypergraph2} only takes time \(\Oh{|V(G)|}\).
	The remaining steps are all linear in the size of \(\incG{G}\), so the running time follows.
\end{proof}

Each subgraph induced by the edges of a connected component of \(\incG{G}\) is claw-connected by definition.
Applying~\cref{alg:dn4} when possible leaves only claw-connected subgraphs with maximum degree~3.
This leads to \cref{cor:decomp_alg}:
\begin{corollary}\label{cor:decomp_alg}
	There exists an algorithm that, given a graph \(G\), either outputs the winner of the unbiased claw game on \(G\) or computes pairwise disjoint sets \(E_1,\dots,E_\ell\subseteq E(G)\) such that, for all \(1\leq i\leq\ell\), the graph \(G[E_i]\) is claw-connected and \(\Delta(G[E_i])=3\).
	Additionally, Maker wins on \(\incG{G}\) if and only if there exists \(1\leq i\leq\ell\) such that Maker wins on \(\incG{G[E_i]}\).
	This algorithm runs in time \(\Oh{|V(G)|}\).
\end{corollary}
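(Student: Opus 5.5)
The algorithm first scans the edges of \(G\), maintaining vertex degrees, and aborts with output ``Maker'' as soon as some vertex reaches degree~5. This step is justified by \cref{prop:deg5_trivial}. As in the proof of \cref{thm:claw_con_alg}, by the pigeonhole principle at most \(2|V(G)|+1\) edges are inspected before either aborting or finishing. Hence afterwards \(\Delta(G)\leq4\), \(|E(G)|\leq2|V(G)|\), and every vertex is the centre of at most four claws. We then build \(\incG{G}\) exactly as in steps~\labelcref{dn4:hypergraph1,dn4:hypergraph2} of \cref{alg:dn4}. For each vertex \(v\), we add one hyperedge for each 3-subset of the edges incident to \(v\). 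This takes time \(\Oh{|V(G)|}\) and yields a hypergraph of size at most \(6|V(G)|\).

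Next we compute the connected components of \(\incG{G}\) by breadth-first search in linear time. Isolated edge-vertices, i.e.\ edges lying in no claw, are discarded. Each remaining component \(C\) gives an edge set \(F_C\subseteq E(G)\). By definition of \(\incG{G}\), every claw of \(G\) lies inside a single component, so \(\incG{G[F_C]}\) is exactly the component \(C\). Consequently \(G[F_C]\) is claw-connected, and the sets \(F_C\) are pairwise disjoint.

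The key game-theoretic fact is that the Maker-Breaker game on a disjoint union of hypergraphs is won by Maker if and only if Maker wins on some component. Maker can play his winning strategy in that component and answer any Breaker move elsewhere with an arbitrary move inside it, since an extra Maker vertex never hurts. Breaker wins on the union by answering each move in the component where it was made with his winning strategy there, playing arbitrarily when that component is exhausted. Isolated vertices are irrelevant to the outcome.

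For each component we compute \(\Delta(G[F_C])\) from the degree counts restricted to \(F_C\); the total cost is linear. Since \(C\) contains a claw, \(\Delta(G[F_C])\geq3\). If \(\Delta(G[F_C])=4\), we run \cref{alg:dn4} on \(G[F_C]\), which is correct and runs in \(\Oh{|F_C|}\) time by \cref{thm:claw_con_alg}. If it returns ``Maker'' for some component, we output ``Maker''. The remaining components with \(\Delta=3\) are output as \(E_1,\dots,E_\ell\). If \(\ell=0\), all components are Breaker's wins and we output ``Breaker''. Otherwise all degree-4 components are Breaker's wins, so by the disjoint-union fact Maker wins on \(\incG{G}\) if and only if he wins on some \(\incG{G[E_i]}\).

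The running times sum to \(\Oh{\sum_C |F_C| + |V(G)|}=\Oh{|V(G)|}\), provided \cref{alg:dn4} is invoked on each component with cost linear in \(|F_C|\) rather than in \(|V(G)|\). We can arrange this by passing only the vertices spanned by \(F_C\), of which there are at most \(2|F_C|\). The main point to be careful about is this bookkeeping, together with the precise justification of the disjoint-union decomposition; everything else reuses \cref{thm:claw_con_alg} directly.
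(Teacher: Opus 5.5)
Your proposal is correct and follows essentially the same route as the paper: abort on a degree-5 vertex, build \(\incG{G}\), split it into connected components, discard isolated edge-vertices, run \cref{alg:dn4} on the components that contain a degree-4 vertex, and return the remaining (maximum degree~3) components, with linear total cost. The differences are cosmetic. You detect the degree-4 components via degree counts restricted to each component, whereas the paper marks the edges at degree-4 vertices of \(G\); these are equivalent because all four edges at a degree-4 vertex lie in a common claw. You also spell out the disjoint-union argument and the identity \(\incG{G[F_C]}=C\), which the paper uses implicitly.
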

\begin{proof}
	The algorithm proceeds as follows:
	First iterate over \(E(G)\) to determine the degree of every vertex.
	During the loop, if any vertex is found to have degree at least~5 due to the edges enumerated so far, the loop is aborted and the algorithm outputs ``Maker''.
	This is correct by \cref{prop:deg5_trivial}.
	If the loop finishes, every vertex has degree at most~4.
	The algorithm then iterates over every vertex marking the edges incident to vertices of degree~4.

	As \(\Delta(G)\leq 4\), every vertex is the center of at most~4 claws.
	Compute \(\incG{G}\) by starting from the hypergraph \((E(G),\emptyset)\).
	Iterate over every vertex \(v\) and add a hyperedge for every claw using only edges incident to \(v\).
	Compute the connected components \(K_1,\dots,K_\ell\) of \(\incG{G}\) using breadth-first search.
	
	Initialize \(S=\emptyset\).
	Iterate over the indices \(1,\dots,\ell\).
	If \(|K_i|=1\), then the single edge-vertex in \(K_i\) does not participate in any claws, so move on to the next index.
	If \(K_i\) does not contain a marked edge-vertex, add \(i\) to \(S\) and move on to the next index.
	Otherwise, \(\Delta(G[K_i])=4\), so apply~\cref{alg:dn4} to \(G[K_i]\).
	If Maker wins, return that Maker wins on \(\incG{G}\).
	Else, move on to the next index.
	If \(S=\emptyset\) after the loop, output ``Breaker''.
	Otherwise, return \(K_{i_1},\dots,K_{i_k}\) where \(S=\{i_1,\dots,i_k\}\).
	
	By~\cref{thm:claw_con_alg},~\cref{alg:dn4} runs in time \(\Oh{|V(G[K_i])|}\) for all \(1\leq i\leq\ell\).
	The sets \(K_1,\dots,K_\ell\) form a partition of \(E(G)\), so \(\sum_{i=1}^\ell|V(G[K_i])|\leq2|V(G)|\).
	All other steps can be executed in time \(\Oh{|V(G)|}\), so the overall running time is \(\Oh{|V(G)|}\).
	
	We now show the remaining properties of the algorithm.
	After iterating over the indices, Breaker wins on \(\incG{G[K_i]}\) for every \(i\in\{1,\dots,\ell\}\setminus S\) because the indices not in \(S\) either correspond to isolated edges or were determined to be Breaker's win by \cref{alg:dn4}.
	Thus, Maker wins on \(\incG{G}\) if and only if there exists \(1\leq j\leq k\) such that Maker wins on \(\incG{G[K_{i_j}]}\).
	In particular, Breaker wins on \(\incG{G}\) if \(S\) is empty.
	The connected components are pairwise disjoint subsets of \(E(G)\).
	Let \(i\in S\).
	As \(K_i\) is a connected component of \(\incG{G}\), \(G[K_i]\) is claw-connected.
	Additionally, \(K_i\) cannot be an isolated vertex and only contains unmarked edges, i.e.~edges incident to vertices of degree at most~3.
	Because every hyperedge in \(K_i\) corresponds to a claw, \({\Delta(G[K_i])=3}\).
\end{proof}

We obtain an algorithm for the unbiased claw game on general graphs running in time \(\Oh{|V(G)|^3}\) by combining \cref{cor:decomp_alg} with the following result:
\begin{lemmaq}[{\cite[Theorem~4.12]{DBLP:journals/corr/abs-2607-18569}}]\label{lem:cubic_alg}
	There exists an algorithm that, given a linear hypergraph \(H\) of rank at most~3, determines the winner of the Maker-Breaker game played on \(H\) in time \(\Oh{|V(H)|^3+|V(H)|^2|E(H)|}\).
\end{lemmaq}
\begin{corollary}
	There exists an algorithm that, given a graph \(G\), determines the winner of the unbiased claw game played on \(G\) in time \(\Oh{|V(G)|^3}\).
\end{corollary}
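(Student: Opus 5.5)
Run the decomposition algorithm of \cref{cor:decomp_alg} first, in time \(\Oh{|V(G)|}\). If it outputs a winner, we are done. Otherwise it returns pairwise disjoint sets \(E_1,\dots,E_\ell\subseteq E(G)\) such that each \(G[E_i]\) is claw-connected with \(\Delta(G[E_i])=3\). By the same corollary, Maker wins on \(\incG{G}\) if and only if Maker wins on some \(\incG{G[E_i]}\). It therefore suffices to decide each game \(\incG{G[E_i]}\) and output ``Maker'' exactly when at least one of them is Maker's win.

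For a fixed \(i\), we claim that \(H_i\coloneqq\incG{G[E_i]}\) is a linear \(3\)-uniform hypergraph, so that \cref{lem:cubic_alg} applies. Every hyperedge is a claw, hence has exactly three edge-vertices. Two distinct claws \(S,S'\) sharing two edges \(f,f'\) must have the same center: two edges of a claw share its center, and two distinct edges share at most one endpoint. A vertex of degree at most \(3\) is the center of at most one claw, so \(S=S'\). Hence distinct hyperedges meet in at most one vertex. We build \(H_i\) by enumerating, for each vertex of degree \(3\) in \(G[E_i]\), its unique claw. This takes time \(\Oh{|E_i|}\), and it gives \(|V(H_i)|=|E_i|\) and \(|E(H_i)|\le|V(G[E_i])|\).

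Applying \cref{lem:cubic_alg} to \(H_i\) costs \(\Oh{|E_i|^3+|E_i|^2|V(G[E_i])|}\). Since \(\Delta(G[E_i])=3\), we have \(|E_i|\le\tfrac32|V(G[E_i])|\), so this is \(\Oh{n_i^3}\) with \(n_i\coloneqq|V(G[E_i])|\). The sets \(E_i\) are disjoint and each edge has two endpoints, so \(\sum_i n_i\le\sum_i 2|E_i|\le 2|E(G)|\). Once the decomposition step has not aborted, \(\Delta(G)\le4\), hence \(|E(G)|\le 2|V(G)|\) and \(\sum_i n_i\le4|V(G)|\). Convexity then gives \(\sum_i n_i^3\le(\sum_i n_i)^3=\Oh{|V(G)|^3}\). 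Adding the linear-time decomposition, the total running time is \(\Oh{|V(G)|^3}\).

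The main step to check is the linearity of \(H_i\). It relies on the decomposition leaving only subgraphs of maximum degree exactly~\(3\): a vertex of degree~\(4\) would produce claws sharing two edges, which is exactly why \cref{lem:cubic_alg} could not be applied to \(\incG{G}\) directly.
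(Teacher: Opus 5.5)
Your proof is correct and follows essentially the same route as the paper: you run the decomposition of \cref{cor:decomp_alg}, apply \cref{lem:cubic_alg} to each \(\incG{G[E_i]}\), and bound the total cost using \(|E(G)|\le 2|V(G)|\) and superadditivity of cubes. Your explicit check that each \(\incG{G[E_i]}\) is linear fills in a hypothesis of \cref{lem:cubic_alg} that the paper uses without comment: claws sharing two edges share a center, and a vertex of degree at most~3 is the center of at most one claw.
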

\begin{proof}
	The algorithm proceeds as follows:
	First, apply the algorithm of~\cref{cor:decomp_alg}.
	If the winner of the game on \(\incG{G}\) was already determined, return the winner.
	Otherwise, pairwise disjoint edge sets \(E_1,\dots,E_\ell\) are returned.
	For \(1\leq i\leq\ell\), use \cref{lem:cubic_alg} to determine the winner of the game on \(\incG{G[E_i]}\).
	Output ``Maker'' if Maker wins any of these games, otherwise output ``Breaker''.
	
	For \(1\leq i\leq\ell\), \(|V(\incG{G[E_i]})|=|E_i|\) and \(|E(\incG{G[E_i]})|\leq2|E_i|\), so the running time of the algorithm of~\cref{lem:cubic_alg} is \(\Oh{|E_i|^3}\).
	If \(G\) has more than \(2|V(G)|\) edges, the algorithm of~\cref{cor:decomp_alg} finds that Maker wins, so the algorithm of~\cref{lem:cubic_alg} is only used when \(|E(G)|\leq2|V(G)|\).
	As \(\sum_{i=1}^\ell|E_i|^3\) with the constraint \(\sum_{i=1}^\ell|E_i|\leq|E(G)|\) is maximized when \(E_1=E(G)\), the overall running time is \(\Oh{|V(G)|^3}\).
\end{proof}

This improves on both the algorithm of~\cite{DBLP:journals/dam/DucheneGINOPS25}, which is only an FPT-algorithm, and the algorithm of Galliot, Gravier and Sivignon~\cite{galliot2025makerbreakersolvedpolynomialtime}, which runs in time \(\Oh{|V(G)|^{16}}\).
The existence of a linear-time algorithm on general graphs is left as an open question for future work.

\section{Lower Bound on the Query Complexity}\label{sec:query_lb}
We complement our linear-time algorithm with a linear lower bound on the query complexity of deciding the unbiased claw game in the adjacency-array model, thereby showing optimality of our algorithm in this model.
The conditions in \cref{lem:two_deg_4,thm:degree_4_chara} show that the unbiased claw game is highly related to questions of connectivity.
Intuitively, connectivity cannot be decided without inspecting every vertex at least once.
We now make this intuition precise.

The adjacency-array model was used in~\cite{DBLP:journals/siamcomp/DurrHHM06} to analyze the quantum query complexity of connectivity.
In this model, a computationally unbounded agent is given the degrees \(d_1,\dots,d_n\) of the vertices and, for every vertex \(v_i\), an array \(f_i:[d_i]\to[n]\) of the neighbors of \(v_i\).
The value \(f_i(j)\) is the index of the \(j\)-th neighbor of vertex \(v_i\) according to some fixed ordering of the edges.
Only queries to the arrays are counted.
The degree-sequence is not counted as part of the queries.

\begin{figure}[t]
	\centering
	\hspace{15mm}
	$\vcenter{\hbox{\includegraphics[scale=0.8]{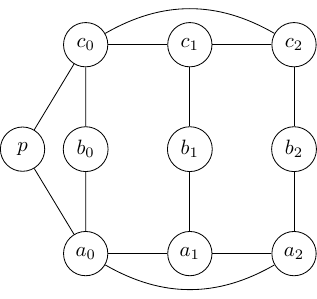}}}$
	\hspace{1cm}
	$\vcenter{\hbox{\includegraphics[scale=0.8]{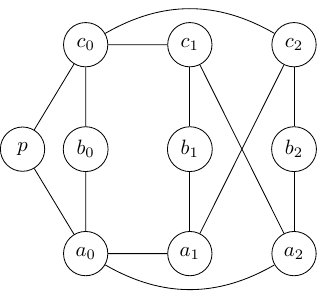}}}$
	\hspace{15mm}
	\caption{The graph \(G_2\) (left) and \(G_2^{1}\) (right) as defined in the proof of \cref{thm:lower_bound}.}
	\label{fig:query_graphs}
\end{figure}

\begin{theorem}\label{thm:lower_bound}
	In the adjacency-array model, every agent correctly deciding the unbiased claw game requires at least \(\Omega(|V(G)|)\) queries in the worst case.
\end{theorem}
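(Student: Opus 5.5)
We argue by an adversary argument on a family of graphs \(G_k\) and perturbations \(G_k^{i}\) that share the same degree sequence, so the agent learns nothing from the degrees and must locate the perturbation by queries. The game on \(G_k\) and on every \(G_k^{i}\) should have different winners, and a query should reveal information about only \(\Oh{1}\) of the \(\Theta(k)\) possible perturbation sites.

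\textbf{Construction.} By \cref{lem:two_deg_4}, a claw-connected graph with two degree-4 vertices is Maker's win. If the two degree-4 vertices lie in different claw-components, however, each component may be Breaker's win. Take \(G_k\) to be a long "cycle" made of \(k\) identical gadgets. For instance, take a cycle \(C_{2k}\) with a pendant structure attached so that every vertex has degree at most 3, plus two far-apart vertices \(x,y\) of degree 4. Choose the gadgets so that \(\incG{G_k}\) consists of two claw-components, one around \(x\) and one around \(y\), each isomorphic to a Breaker-win structure: a single degree-4 vertex with no condition of \cref{thm:degree_4_chara} satisfied, for example components in which every \(K_e\) is tree-like with \(\incG{G}[K_e]^{+e}\) Breaker's win. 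The two components are separated by \(k\) "cut" positions, where two parallel strands pass without forming a claw connecting the sides. The perturbation \(G_k^{i}\) swaps endpoints of two edges at position \(i\): \(ab,cd\) becomes \(ad,cb\). This preserves all degrees and merges the two claw-components, so \(G_k^{i}\) is claw-connected with two degree-4 vertices and is Maker's win by \cref{lem:two_deg_4}. \(G_k\) is Breaker's win because the outcome decomposes over the components of \(\incG{G_k}\), each of which is Breaker's win by \cref{thm:degree_4_chara}.

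\textbf{Adversary.} The adversary answers every query consistently with \(G_k\). A query to the array of a vertex \(v\) distinguishes \(G_k\) from \(G_k^{i}\) only if \(v\) is one of the four endpoints at position \(i\). The array indices can be arranged so that the answer differs only at the swapped entry, and each vertex belongs to \(\Oh{1}\) positions. So after \(q\) queries, at most \(\Oh{q}\) positions \(i\) are ruled out. If \(q=o(k)\), some \(G_k^{i}\) remains consistent with all answers, yet it has a different winner from \(G_k\), so the agent errs on one of them. Since \(|V(G_k)|=\Theta(k)\), this gives the \(\Omega(|V(G)|)\) bound.

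\textbf{Main obstacle.} The hardest part is designing the gadget concretely and checking that \(G_k\) is Breaker's win. We must verify, via \cref{thm:degree_4_chara} or \cref{lem:preprocessing}, that each half fails conditions \labelcref{item:claw-chara-3-in-comp}, \labelcref{item:claw-chara-biased-win} and \labelcref{item:claw-chara-good-anchor}. The simplest first attempt is to make each half a claw-tree: degree-3 vertices arranged so that \(\incG{}\) of each side is a hypertree attached to \(K_{1,4}\) through distinct \(e\in E^*\). Breaker then wins each \(\incG{G}[K_e]^{+e}\) by a pairing strategy. The edge swap must be placed between two adjacent degree-3 vertices so that it creates a claw bridging the halves without introducing any degree-5 vertex. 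If making the halves Breaker-win under \cref{thm:degree_4_chara} proves fiddly, a fallback is to drop degree-4 vertices in \(G_k\) and rely on \(\Delta=2\) versus a created degree-3 center. That fails because swaps preserve degrees, so we keep the two-degree-4 design above.
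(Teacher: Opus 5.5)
\textbf{Gap: the graph family is never constructed.} Your strategy is the paper's. You want a graph \(G_k\) whose claw hypergraph has two components, each containing one degree-4 vertex and each Breaker's win by \cref{thm:degree_4_chara}. You want \(\Theta(k)\) degree-preserving 2-switches \(G_k^i\) that glue these components so that \cref{lem:two_deg_4} applies. Then you run an adversary count. The adversary part is fine, and it is slightly more careful about array positions than the paper's wording. The gap is that you never construct \(G_k\), and the existence of such a family is the whole content of the theorem. The three properties you rely on are only asserted for an unspecified graph:
\begin{enumerate}
\item two separate claw-components;
\item Breaker's win on each of them;
\item every swap producing one claw-component containing both degree-4 vertices.
\end{enumerate}

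\textbf{Your concrete hints do not deliver these properties as stated.}
\begin{itemize}
\item Along a path or cycle of degree-3 vertices, consecutive claws share edges, so everything lies in one claw-component. A single cycle \(C_{2k}\) carrying both \(x\) and \(y\) would therefore make \(G_k\) itself Maker's win by \cref{lem:two_deg_4}. The separation between the two sides has to come from vertices of degree at most~2.
\item With tree-like sides, deleting \(ab\) and \(cd\) disconnects both sides. After adding \(ad,cb\), the vertices \(x\) and \(y\) can end up in different claw-components of \(G_k^i\). Then \cref{lem:two_deg_4} gives nothing unless each swap is oriented to join the endpoints nearer to \(x\) and \(y\).
\item If two elements of \(E^*\) share a component of \(\incG{G}-\claws{E^*}\), refuting condition~\labelcref{item:claw-chara-biased-win} is not enough. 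You must also refute condition~\labelcref{item:claw-chara-good-anchor}.
\end{itemize}

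\textbf{How the paper fixes this.} It uses an explicit ladder-like graph. Take two cycles \(a_0a_1\cdots a_ka_0\) and \(c_0c_1\cdots c_kc_0\). Join \(a_i\) and \(c_i\) through a degree-2 vertex \(b_i\) for every \(i\). Add a vertex \(p\) adjacent to \(a_0\) and \(c_0\).
\begin{itemize}
\item Then \(a_0\) and \(c_0\) are the only degree-4 vertices, and all other \(a_i,c_i\) have degree~3.
\item No claw is centred at \(b_i\) or \(p\), so the \(a\)-side and the \(c\)-side are distinct claw-components.
\item \(G_k^i\) replaces \(a_ia_{i+1},c_ic_{i+1}\) by \(a_ic_{i+1},a_{i+1}c_i\), with indices mod \(k+1\). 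Each side is a cycle, so deleting one edge does not disconnect it. The new edges have degree-3 endpoints on both sides. Hence \(G_k^i\) is claw-connected with two degree-4 vertices.
\end{itemize}

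\textbf{Why each side is Breaker's win.} On the \(a\)-side:
\begin{itemize}
\item \(a_0p\) and \(a_0b_0\) are isolated in \(\incG{G}-\claws{E^*}\).
\item \(a_0a_1\) and \(a_0a_k\) lie in the same cycle component.
\item Each \(\incG{G}[K_e]^{+e}\) is trivial or a chain with a single 2-edge.
\item Pair \(a_0p\) with \(a_0b_0\) via \cref{lem:pairing}. Then \(\incG{G}+\{a_0a_1,a_0a_k\}\) reduces to a cycle of claws with one 2-edge, which Breaker wins.
\end{itemize}
So none of the conditions of \cref{thm:degree_4_chara} hold on either side. The four changed edges for different \(i\) are disjoint, which forces at least \(k+1\) queries on \(4+3k\) vertices.
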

\begin{proof}
	For \(2\leq k\in\mathbb{N}\), we construct a graph \(G_k\) with \(4+3k\) vertices on which any agent needs to query the presence of at least \(k+1\) edges to correctly decide the outcome of the game.
	For values of \(n\geq10\) which do not have the form \(n=4+3k\), graphs with this property are obtained by padding the graphs \(G_k\) with isolated vertices.
	Let \(2\leq k\in\mathbb{N}\) be fixed.
	The graph \(G_k\) is defined as follows:
	\begin{align*}
		V(G_k)=\,&\{p,a_0,\dots,a_k,b_0,\dots,b_k,c_0,\dots,c_k\}\\
		E(G_k)=\,&\{pa_0,pc_0,a_0a_k,c_0c_k\}\cup\{a_ia_{i+1},c_ic_{i+1}:0\leq i<k\}\cup\{a_ib_i,b_ic_i:0\leq i\leq k\}
	\end{align*}
	In the following, we view indices modulo \(k+1\), e.g.~\(a_0=a_{k+1}\).
	Let \(0\leq i\leq k\) be fixed and define the graph.
	The graph \(G_k^i\) is defined as follows:
	\begin{align*}
		V(G_k^i)= V(G_k) && E(G_k^i)=\left(E(G_k)\cup\{a_ic_{i+1},a_{i+1}c_i\}\right)\setminus\{a_ia_{i+1},c_ic_{i+1}\}
	\end{align*}
	As an example, the graphs \(G_2\) and \(G_2^{1}\) are displayed in \cref{fig:query_graphs}.
	
	Let \(A\) be an agent correctly deciding the unbiased claw game.
	By \cref{thm:degree_4_chara}, Breaker wins on \(\incG{G_k}\).
	By \cref{lem:two_deg_4}, Maker wins on \(\incG{G_k^i}\).
	Thus, \(A\) must distinguish these two graphs.
	The graphs \(G_k\) and \(G_k^i\) have the same degree sequence, so they cannot be distinguished by degrees alone.
	The only differences between the graphs are the edges in \(\{a_ia_{i+1},c_ic_{i+1},a_ic_{i+1},a_{i+1}c_i\}\), so \(A\) must query the presence of at least one of these four edges.
	As this holds for any \(0\leq i\leq k\), at least one quarter of the edges 
	\begin{equation*}
		\{a_ia_{i+1},c_ic_{i+1},a_ic_{i+1},a_{i+1}c_i:0\leq i\leq k\}
	\end{equation*}
	must be queried.
	Consequently, \(A\) queries at least \(k+1\) edges.
\end{proof}

\section{Open Problems and Future Work}
We have given a linear time algorithm for the unbiased claw game on board graphs with maximum degree different from three.
Can a linear running time also be achieved for graphs with maximum degree three?
By \cref{cor:decomp_alg}, such an algorithm would settle the complexity of the problem.

\Cref{thm:degree_4_chara} is similar to~\cite[Theorem~4.17]{DBLP:journals/corr/abs-2607-18569}.
Both use a threatening structure to improve the running time of algorithms for different Maker-Breaker games.
Is there a unified view of this technique applicable to 3-uniform hypergraphs?
Such a result could lead to an improved algorithm for the Maker-Breaker game on hypergraphs with rank at most three.

\clearpage
\bibliographystyle{elsarticle-num}
\bibliography{k13}
\end{document}